\renewenvironment{algorithm}[1][htb]{%
    \refstepcounter{algorithm}%
    \par\medskip\noindent%
    \hrule height 0.8pt%
    \kern2pt%
    \textbf{Algorithm~\thealgorithm. #1}%
    \kern2pt\hrule height 0.8pt%
    \kern4pt%
}{%
    \kern4pt\hrule height 0.8pt\par\medskip
}
\theoremstyle{plain}
\newtheorem{theorem}{Theorem}
\newtheorem{lemma}{Lemma}
\newtheorem{corollary}{Corollary}
\theoremstyle{definition}
\newtheorem{definition}{Definition}
\theoremstyle{remark}
\newcommand{\proofstep}[1]{\medskip\noindent\textbf{#1}}
\definecolor{hjx}{rgb}{0.4,0.3,0.9}
\begin{document}


\title{Estimating Local Observables via Cluster-Level Light-Cone Decomposition}

\author{Junxiang Huang}
\affiliation{School of Computer Science, Peking University, Beijing 100871, China}
\affiliation{Center on Frontiers of Computing Studies, Peking University, Beijing 100871, China}

\author{Yunxin Tang}
\affiliation{Center on Frontiers of Computing Studies, Peking University, Beijing 100871, China}
\affiliation{School of Electronics Engineering and Computer Science, Peking University, Beijing 100871, China}

\author{Xiao Yuan}
\email{xiaoyuan@pku.edu.cn}
\affiliation{Center on Frontiers of Computing Studies, Peking University, Beijing 100871, China}

\date{\today}
\begin{abstract}
Simulating large quantum circuits on hardware with limited qubit counts is often attempted through methods like circuit knitting, which typically incur sample costs that grow exponentially with the number of connections cut. 
In this work, we introduce a framework based on Cluster-level Light-cone analysis that leverages the natural locality of quantum workloads. We propose two complementary algorithms: the {Causal Decoupling Algorithm}, which exploits geometric disconnections in the light cone for sampling efficiency, and the {Algebraic Decomposition Algorithm}, which utilizes algebraic expansion to minimize hardware requirements. These methods allow simulation costs to depend on circuit depth and connectivity rather than system size.
Together, our results generalize Lieb-Robinson-inspired locality to modular architectures and establish a quantitative framework for probing local physics on near-term quantum devices by decoupling the simulation cost from the global system size.
\end{abstract}
\maketitle

\section{Introduction}

The noisy intermediate-scale quantum (NISQ) era is characterized by a significant gap between the ambitious goals of quantum algorithms and the practical limitations of current hardware~\cite{preskill2018quantum,daley2022practical}. Today's leading platforms, such as superconducting circuits and trapped-ion systems, are constrained by modest qubit counts, finite coherence times, and restricted connectivity, with gate fidelities still far from the fault-tolerant threshold~\cite{terhal2015quantum}. This hardware reality makes the direct execution of large-scale quantum algorithms on a single monolithic processor infeasible. Consequently, a primary challenge in quantum computing research is to develop hybrid quantum-classical strategies that can effectively leverage these small, imperfect devices to tackle problems of practical significance~\cite{peruzzo2014variational,farhi2014quantum}.

A long-term vision to overcome this barrier is the construction of distributed, modular quantum computers: multiple high-precision quantum cores linked by quantum networks for true scalability~\cite{kimble2008quantum,wehner2018quantum}. 
However, until reliable quantum interconnects are widely available, hybrid schemes based on local operations and classical communication (LOCC) offer a stopgap. 
By either parallelizing subcircuits across distinct processors or time-slicing larger algorithms on one device, LOCC protocols can stitch together multiple small quantum units into an effectively larger machine~\cite{bravyi2018quantum,peng2020simulating}.

Meanwhile, different divide-and-conquer strategies, such as \textit{circuit cutting/knitting}, \textit{quasi-probability decompositions}, and classical perturbative splitting of Hamiltonians, have been proposed recently. 
\textit{Circuit cutting} partitions cross-block gates at the cost of exponential classical post-processing once more than a handful of cuts are introduced~\cite{peng2020simulating,tang2021cutqc,tang2022scaleqc,perlin2021quantum,lowe2023fast,harada2024doubly, ufrecht2023cutting,ufrecht2024optimal,schmitt2025cutting}. 
\textit{Quasi-probability} methods simulate non-physical channels via ``probabilities'' including even negative ones, causing variance to blow up with circuit size~\cite{Hofmann_2009,bravyi2016trading,mitarai2021constructing,piveteau2022quasiprobability,carrera2024combining}. 
Similarly, perturbative splitting of weak Hamiltonian terms transfers complexity into a classical expansion whose number of samples grows exponentially in the strength of total interactions~\cite{yuan2021quantum,fujii2022deep,sun2022perturbative,barratt2021parallel,gentinetta2024overhead,harrow2025optimal}. 
Together, these approaches incur exponential sampling overhead in the number of cuts under standard assumptions, which in practice limits the feasible number of cuts.~\cite{mitarai2021overhead,piveteau2023circuit,jing2025circuit}.

Therefore, developing a simulation strategy that mitigates this system-size-exponential scaling barrier is a critical challenge for leveraging modular quantum architectures.

In this work, we introduce and analyze two complementary frameworks based on a Cluster-level Light-cone analysis, which leverage the natural locality of many quantum workloads~\cite{lieb1972finite,georgescu2014quantum}. 
Our first approach is the {Causal Decoupling Algorithm}, a geometric method that exploits the potential causal disconnection of the light cone to factorize the quantum simulation into independent sub-problems. This strategy offers high sampling efficiency but requires hardware capable of simulating connected light-cone components.
Our second approach, the {Algebraic Decomposition Algorithm}, further expands this framework algebraically. It transforms the entangled evolution within a light cone into a sum of local, unentangled operations. While the number of terms in this sum is exponential in the light cone volume (and thus circuit depth), it is crucially independent of the total system size, thereby avoiding the exponential scaling barrier in the number of qubits. For shallow-depth circuits, this classically tractable expansion allows the full expectation value to be reconstructed from measurements on minimal, single-cluster quantum devices.

Our contributions are as follows:
\begin{enumerate}
  \item We present the Causal Decoupling Algorithm as a geometric decomposition strategy, which leverages the causal structure of the light cone to achieve polynomial sample complexity.
  \item We formalize the Algebraic Decomposition framework, demonstrating how to convert complex entangled evolution into a sum of local tasks, enabling simulation on minimal hardware at the cost of depth-dependent sampling overhead.
  \item For both frameworks, we derive rigorous complexity bounds demonstrating their dependence on circuit depth and local connectivity, rather than the total system size.
\end{enumerate}

The rest of this paper is organized as follows. Section~\ref{sec:bg} reviews the necessary background. Section~\ref{sec:main} presents our main results, detailing both the Causal Decoupling Algorithm and the Algebraic Decomposition Algorithm, along with their corresponding theoretical performance bounds. We discuss the implications and compare the frameworks in Section~\ref{sec:discuss}. All detailed proofs for our theorems and their corollaries are deferred to the Appendices.

\section{Background}
\label{sec:bg}

A critical task in many quantum algorithms is the estimation of expectation values of observables. Specifically, one often needs to compute the quantity 
\begin{equation}
    \mu = \langle 0 | U^\dagger\,O\,U | 0 \rangle,
\end{equation}
where \(U\) is a unitary transformation applied to an initial state \(|0\rangle\). The decomposition of \(O\) into a sum of Pauli operators further complicates this task \cite{knill2007optimal,huang2020classical}. In practical scenarios, the operator \(O\) is commonly expressed as a weighted sum of tensor-product Pauli terms:
\begin{equation}
    O = \sum_{\alpha=1}^m c_\alpha \left(P_{\alpha1} \otimes \cdots \otimes P_{\alpha n}\right),
\end{equation}
where each \(P_{\alpha j}\) is one of the single-qubit Pauli operators \(\{I,X,Y,Z\}\) \cite{bravyi2021classical}. The structure of these Pauli terms plays a crucial role in the computational complexity of estimating \(\mu\), as it determines how quantum correlations propagate through the system \cite{huang2021derandomization}. 

We introduce a framework to quantify the locality of such observables at both the qubit and cluster levels. 
A Pauli term is said to be \emph{\(s\)-local} if it acts nontrivially on at most \(s\) qubits (or clusters). 
This definition enables a formal characterization of the causal propagation of quantum information, 
analogous to the Lieb--Robinson bounds on information spreading in local quantum systems~\cite{lieb1972finite,bravyi2006liebh}. 
At the cluster level, an \(s\)-local observable interacts with at most \(s\) clusters. 

In geometrically structured quantum systems, such as those used in quantum many-body simulations and quantum error correction, subsystems are arranged spatially so that interactions occur primarily between neighboring qubits \cite{hastings2019locality}. This locality can be exploited to design more efficient simulation algorithms, especially for shallow circuits in which quantum information spreads only to a limited extent (finite light-cone effect) \cite{sun2022perturbative,singh2023experimental,baumer2023efficient}. By extending the concepts of light cones and causal propagation to the cluster level, we can model how information flows within and between clusters, thereby reducing the computational complexity of estimating expectation values \cite{bravyi2006liebh}. 

\subsection{Causal Structure at the Qubit Level}
In spatially structured quantum circuits, the spread of quantum correlations is limited by both the circuit depth and the underlying geometry of qubit interactions. To rigorously characterize this causal propagation, we begin with the standard definitions at the qubit level.

\begin{definition}[Qubit Light Cone]
Let the unitary evolution of a quantum circuit be given by
\begin{equation}
    U = \prod_{t=1}^T V_t,
\end{equation}
where each $V_t$ denotes the set of local quantum gates applied sequentially at time step $t$, and $T$ is the total number of circuit layers (i.e., the circuit depth).

For a target qubit $j$, we denote by $\mathcal{H}_j \cong \mathbb{C}^2$ its single-qubit Hilbert space, and by $\mathcal{O}(\mathcal{H}_j)$ the algebra of linear operators acting on $\mathcal{H}_j$ (i.e., all possible local observables on qubit $j$).

The \emph{light cone} $L(j) \subseteq \{1,\ldots,n\}$ of qubit $j$ is defined as the minimal set of qubits satisfying
\begin{equation}
    \forall\, O_j \in \mathcal{O}(\mathcal{H}_j), \quad
    U^\dagger O_j U = U_{\mathrm{loc}}^\dagger(j)\, O_j\, U_{\mathrm{loc}}(j),
\end{equation}
with
\begin{equation}
    U_{\mathrm{loc}}(j) = \prod_{V_t \in \mathcal{G}(L(j))} V_t,
\end{equation}
where $\mathcal{G}(L(j))$ denotes the set of gates whose support is entirely contained within $L(j)$.
\end{definition}
This definition encapsulates the idea that, under the full evolution $U$, the Heisenberg picture transformation of any local observable $O_j$ depends only on the operations within its light cone; contributions from gates outside $L(j)$ cancel out due to a cancellation mechanism.

\begin{definition}[Qubit-Level Range on Lattice Systems]
\label{def:qubit_range}
For a quantum circuit where qubits are arranged in a specific geometric layout (such as a 1D line or a 2D grid), which provides a meaningful distance metric, we can define the geometric range of a qubit's light cone.

For a target qubit $j$ with light cone $L(j)$, its \emph{range} is defined as the maximum distance from itself to any other qubit within its light cone:
\begin{equation}
    r(j) = \max_{k \in L(j)} d(j,k),    
\end{equation}
where $d(j,k)$ represents the shortest path distance between qubits $j$ and $k$ on the circuit's interaction graph or geometric layout.

The \emph{causal range of the unitary} $U$ is then defined as the maximum such range across all qubits:
\begin{equation}
    r(U) = \max_j r(j).
\end{equation}
\end{definition}

The range quantifies the maximum spatial extent over which local operations can causally affect the observable at qubit $j$.

These definitions formalize the notion that in shallow, geometrically organized circuits, the evolution of a local observable is influenced predominantly by a confined subset of qubits. Due to the limited depth $D$ of such circuits, Lieb-Robinson type bounds ensure that the range $r(j)$ scales at most linearly with $D$~\cite{hastings2004locality}, thereby localizing quantum correlations. This intrinsic locality is central to the development of efficient simulation strategies, as it permits the classical reconstruction of global circuit behavior from the dynamics within localized regions. Such approaches are crucial for simulating large-scale, geometrized subsystem quantum circuits on devices with limited quantum resources.

\subsection{Clustered Quantum Architectures}
We now extend these concepts to clustered quantum systems, which better reflect the architecture of many modern quantum devices~\cite{monroe2014large}. In such systems, qubits are partitioned into clusters, and information propagates via inter-cluster operations.

\begin{definition}[Clustered Quantum System]
\label{def:clustered_system}
An \emph{\(n\)-qubit \(N\)-cluster clustered quantum system} consists of an $n$-qubit Hilbert space $\mathcal{H} = (\mathbb{C}^2)^{\otimes n}$ that is partitioned into a set of $N$ disjoint regions called \emph{clusters}, $\mathcal{C} = \{C_1, \dots, C_N\}$. Each cluster $C_j$ is a set of qubits, such that:
\begin{enumerate}
    \item The clusters form a partition of the $n$ qubits: $\bigcup_{j=1}^N C_j = \{1, \dots, n\}$ and $C_j \cap C_k = \emptyset$ for $j \neq k$.
    \item Each cluster $C_j$ contains at most $d$ qubits, i.e., $|C_j| \le d$.
\end{enumerate}
\end{definition}

\begin{definition}[$D$-Dimensional Clustered Lattice]
\label{def:d_dim_lattice}
A \emph{$D$-dimensional clustered lattice} is a specific instance of a clustered quantum system (Def.~\ref{def:clustered_system}) where the clusters are arranged on the vertices of a $D$-dimensional cubic lattice graph, $\mathbb{Z}^D$. 
A cluster $C_{\mathbf{i}}$ is identified by a coordinate vector $\mathbf{i} = (i_1, \dots, i_D) \in \mathbb{Z}^D$. Two clusters $C_{\mathbf{i}}$ and $C_{\mathbf{j}}$ are considered \emph{adjacent} or nearest-neighbors if their coordinates differ by one in a single dimension, i.e., $\|\mathbf{i} - \mathbf{j}\|_1 = \sum_{k=1}^D |i_k - j_k| = 1$. Physical interactions between qubits in different clusters are restricted to occur only between adjacent clusters.
\end{definition}

\begin{definition}[All-to-All Connected Clustered System]
\label{def:all_to_all_system}
An \emph{all-to-all connected clustered system} is a specific instance of a clustered quantum system (Def.~\ref{def:clustered_system}) where the inter-cluster interaction graph is a complete graph.

In contrast to the geometrically constrained lattice, this architecture permits physical interactions, such as two-qubit gates, to act between any pair of distinct clusters $(C_i, C_j)$ in the system. The application of inter-cluster gates is not restricted by any notion of adjacency or spatial proximity. This models scenarios with long-range interactions or systems where logical connectivity is fully decoupled from the physical layout, such as in certain trapped-ion or photonic platforms.
\end{definition}

\subsection{Light Cones and Observables in Clustered Systems}
Having defined the physical architectures, we can now formalize the concept of causal propagation and define the properties of observables at the cluster level.

\begin{definition}[Cluster Light Cone]
Let the quantum system be an $n$-qubit clustered quantum system partitioned into $N$ clusters. Consider an arbitrary quantum circuit given by
\begin{equation}
    U = \prod_{t=1}^T \left( W_t \cdot V_t \right),
    \label{eq:cluster_U}
\end{equation}
where:
\begin{itemize}
    \item \(V_t = \bigotimes_{(C_j,C_k) \in E_t} V_{jk}^t\) represents the \emph{inter-cluster operations} at layer \(t\), applying local gates only on geometrically adjacent cluster pairs as defined by \(E_t\) (for instance, nearest-neighbor clusters in a two-dimensional grid).
    \item \(W_t = \bigotimes_{C_j} W_{C_j}^t\) represents the \emph{intra-cluster operations} at layer \(t\), which permit arbitrary quantum gates within a single cluster \(C_k\) without any geometric locality constraints.
\end{itemize}
For a target cluster \(C_j\), its \emph{cluster light cone} \(\mathcal{L}(C_j) \subseteq \{1,\dots,N\}\) is defined as the minimal set of clusters satisfying
\begin{equation}
    \forall\, O^{(C_j)} \in \mathcal{O}\left(\mathcal{H}_{C_j}\right),\quad U^\dagger O^{(C_j)} U = \widetilde{U}(j)^\dagger O^{(C_j)} \widetilde{U}(j),
\end{equation}
where
\begin{equation}
    \widetilde{U}(j) = \prod_{t=1}^T \left( \widetilde{W}_t(j) \cdot \widetilde{V}_t(j) \right),
\end{equation}
with 
\begin{equation}
    \widetilde{W}_t(j) = \bigotimes_{C_k \in \mathcal{L}(C_j)} W_{C_k}^t,
\end{equation}
which retains only those intra-cluster operations acting on clusters within \(\mathcal{L}(C_j)\), and \(\widetilde{V}_t(j)\) is the truncated inter-cluster operation that includes only those gates in \(V_t\) whose endpoints both belong to \(\mathcal{L}(C_j)\).

\end{definition}

Having defined the cluster light cone $\mathcal{L}(C_j)$ as the set of all causally relevant clusters, we now introduce two key metrics to quantify its properties. These metrics are crucial as they directly determine the complexity of our simulation algorithms. The first metric, volume, quantifies the total amount of inter-cluster interaction within the cone, which is the primary cost factor for algebraic decomposition methods. The second metric, range, measures the geometric extent of the cone, which is particularly relevant for systems on structured lattices.

\begin{definition}[Volume of the Light Cone Circuit]
\label{def:volume_light_cone}
For a target cluster $C_j$, let $\mathcal{L}(C_j)$ be its cluster light cone and let 
\begin{equation}
    \widetilde{U}(j) = \prod_{t=1}^T \left( \widetilde{W}_t(j) \cdot \widetilde{V}_t(j) \right)
\end{equation}
be the corresponding \emph{light cone circuit}, which is the minimal sub-circuit of $U$ that reproduces the dynamics of any observable initially supported on $C_j$.

The volume of the light cone circuit, denoted as $\text{Vol}(\mathcal{L}(C_j))$, is defined as the total count of fundamental two-qubit gates that constitute the truncated inter-cluster operations $\widetilde{V}_t(j)$ across all time steps.

Specifically, if we let $N(\widetilde{V}_t(j))$ be the number of two-qubit gates in the operator $\widetilde{V}_t(j)$ at time step $t$, the volume is given by:
\begin{equation}
    \text{Vol}(\mathcal{L}(C_j)) := \sum_{t=1}^T N(\widetilde{V}_t(j)).
    \label{eq:light_cone_volume}
\end{equation}
\end{definition}

This volume quantifies the total interaction budget within the causal past, directly impacting the simulation complexity of algebraic decomposition methods.

\begin{definition}[Cluster-Level Range on Lattice Clustered Systems]
\label{def:cluster_range}
For a cluster-level system defined on a geometric lattice, where a meaningful distance metric between clusters exists, we can define the geometric range of the causal light cone.

Let $\delta_{\text{inter}}(C_j,C_k)$ be the inter-cluster geometric distance between clusters $C_j$ and $C_k$ on the lattice (e.g., the shortest path length in the cluster connectivity graph). The \emph{cluster-level range} $\mathcal{R}(C_j)$ of a target cluster $C_j$ is the maximum geometric distance from $C_j$ to any other cluster within its light cone $\mathcal{L}(C_j)$:
\begin{equation}
    \mathcal{R}(C_j) = \max_{C_k \in \mathcal{L}(C_j)} \delta_{\text{inter}}(C_j, C_k).
\end{equation}

The \emph{cluster-level range of the unitary} $U$ is then defined as the maximum such range across all clusters in the system:
\begin{equation}
    \mathcal{R}(U) = \max_j \mathcal{R}(C_j).
\end{equation}
\end{definition}

These definitions extend the notion of causal propagation from individual qubits to clusters. To establish the physical basis for this cluster-level light cone, we relate it to the standard Lieb-Robinson (LR) bound. For a qubit lattice with local interactions, the LR theorem establishes an effective maximum speed for the spread of information, confining causal influence within a linear light cone up to exponentially small corrections~\cite{lieb1972finite}.

Our framework can be understood through the lens of \emph{coarse-graining}, where the system is viewed not as a lattice of qubits, but as a graph of clusters. In this picture, each cluster $C_j$ is treated as a single ``super-site''. The intra-cluster operations, $W_t$, act entirely within these super-sites, scrambling information locally but not transmitting it across cluster boundaries. 

In contrast, the inter-cluster operations $V_t$, constrained by geometric adjacency, act as the exclusive drivers of information propagation on this cluster graph. The causal structure emerges directly from the layered application of these gates. Starting from a trivial support $\mathcal{L}(C_j) = \{C_j\}$ at $t=0$, each subsequent layer $V_t$ expands the ``information front'' by at most one unit of inter-cluster distance $\delta_{\text{inter}}$. Consequently, after the full circuit of $T$ layers, the light cone $\mathcal{L}(C_j)$ is strictly confined to clusters within distance $T$ from the target. This implies a strict upper bound on the causal range:
\begin{equation}
    \mathcal{R}(U) = \max_j \mathcal{R}(C_j) \le T.
    \label{eq:range_bound}
\end{equation}
This bound defines an effective cluster-level Lieb-Robinson velocity $v_{\text{LR, cluster}} = 1$ (in units of clusters per layer), establishing the linear light cone used in our analysis.

This framework also accommodates non-geometric intra-cluster operations. In distributed quantum computing, for instance, each cluster may correspond to a highly connected processing unit (e.g., a transmon in a superconducting chip), while communication between clusters is mediated by tunable couplers enforcing limited geometric connectivity~\cite{chamberland2022modular}.

By extending the concepts of light cones and range to the cluster level, we can capture the localized propagation of quantum information both within and between clusters. This approach is crucial for simulating large-scale quantum systems on devices with limited resources, as it exploits spatial locality to reduce computational complexity and memory requirements.

To formalize this, we define $m$-sparse operators, qubit-level and cluster-level \(s\)-local operators. These definitions provide the mathematical framework for modeling and simulating quantum systems with varying degrees of locality and inter-cluster interaction. The precise definitions are presented below.

\begin{definition}[$m$-Sparse Operator]
\label{def:m_sparse_operator}
An operator $O$ is called \emph{$m$-sparse} if it can be expressed as a linear combination of at most $m$ Pauli strings. Throughout this work, we will represent such an operator in the form:
\begin{equation}
    O = \sum_{\alpha=1}^{m} c_\alpha P_\alpha,
\end{equation}
where $\{P_\alpha\}$ are Pauli strings and $\{c_\alpha\}$ are complex coefficients.

Associated with this decomposition is the 1-norm of the coefficients, denoted by $\lambda_c$, which is a key parameter that determines the sampling complexity for estimating the expectation value of $O$:
\begin{equation}
    \lambda_c = \sum_{\alpha=1}^{m} |c_\alpha|.
\end{equation}
\end{definition}

\begin{definition}[Cluster-Level \(s\)-local Operator]
A cluster-level operator \( O \) is called \emph{cluster-level \(s\)-local} if each term in its decomposition acts nontrivially on at most \(s\) clusters. That is, for every term \( P_\alpha \) in the decomposition of \( O \),
\[
\text{cluster-weight}(P_\alpha) \leq s \quad \forall\, \alpha.
\]
This means that the non-identity factors in each Pauli string are confined to at most \(s\) clusters.
\end{definition}

Finally, based on the concepts and definitions mentioned above, we provide a definition of the Pauli operator light cone and its volume.


\begin{definition}[Light Cone, Circuit, and Volume for a Pauli Term]
\label{def:pauli_light_cone}
Let $P_\alpha$ be a Pauli term in the observable $O$. The combined light cone, its associated minimal circuit, and its volume are constructed as follows:

\begin{enumerate}
    \item \textbf{Identify Support:} Identify the initial set of clusters $\mathcal{C}_\alpha = \{ C_j \mid P_\alpha \text{ acts non-trivially on } C_j \}$. By the $s$-local assumption, $|\mathcal{C}_\alpha| \le s$.

    \item \textbf{Construct Union of Light Cones:} For each cluster $C_j \in \mathcal{C}_\alpha$, determine its individual cluster light cone $\mathcal{L}(C_j)$. The combined light cone for the term $P_\alpha$ is the union of these individual cluster sets:
    \begin{equation}
        \mathcal{L}(\mathcal{C}_\alpha) = \bigcup_{C_j \in \mathcal{C}_\alpha} \mathcal{L}(C_j).
    \end{equation}

    \item \textbf{Construct Light Cone Circuit:} The minimal sub-circuit that correctly evolves $P_\alpha$, denoted $\widetilde{U}(\mathcal{C}_\alpha)$, is constructed by retaining only those gates (both intra- and inter-cluster) from the full unitary $U$ that satisfies
    \begin{equation}
        U^\dagger P_\alpha U=\widetilde{U}(\mathcal{C}_\alpha)^\dagger P_\alpha\widetilde{U}(\mathcal{C}_\alpha).
    \end{equation}

    \item \textbf{Define Light Cone Volume:} The light cone volume for the Pauli term $P_\alpha$, denoted $\text{Vol}(\mathcal{L}(\mathcal{C}_\alpha))$, is the total count of fundamental inter-cluster gates within its light cone circuit $\widetilde{U}(\mathcal{C}_\alpha)$.
\end{enumerate}
\end{definition}

\section{Main Results}
\label{sec:main}
The principle of locality, as formalized by the light cone structure in the previous section, offers a powerful lever for simulating complex quantum systems. The central challenge, however, lies in how to best exploit this structure. In this section, we explore two fundamentally different philosophies for tackling this challenge, leading to two distinct simulation algorithms.


Our first approach, the {Causal Decoupling Algorithm}, treats the light cone as a geometric graph. It seeks to find structural weaknesses—causal disconnections—that allow the quantum simulation itself to be broken into smaller, independent pieces. This strategy significantly reduces the sampling complexity but requires a quantum processor capable of handling these fractured, yet potentially large, sub-problems.

Our second approach, the {Algebraic Decomposition Algorithm}, takes this logic a step further. Instead of stopping at the geometric boundary of the light cone, it algebraically decomposes the gates \emph{within} the light cone itself. By expanding the entangling gates into a sum of local terms, it breaks down the large connected components required by the first method, entirely sidestepping the need for a multi-cluster quantum computer. This pushes the complexity into the classical post-processing and the number of measurement shots, enabling simulation on minimal hardware.

We now proceed to formalize these two frameworks, presenting their respective algorithms, performance-guaranteeing theorems, and practical complexity analyses.

\subsection{Causal Decoupling Algorithm}

We begin with the Causal Decoupling Algorithm, a geometric approach that directly utilizes the light-cone structure defined in Sec.~\ref{sec:bg}. Its core idea is to identify and leverage potential causal disconnections within the observable's light cone. If the light cone naturally fractures into disjoint components, the quantum simulation can be factorized into smaller, independent tasks without any algebraic modification to the gates.

Our result, formalized in Theorem~\ref{thm:Ddim}, establishes the resource profile for this geometric strategy. It demonstrates that the sample complexity scales polynomially with the observable's locality $s$, independent of the global system size.
\begin{theorem}[Local Observable on $D$-Dimensional Clustered Systems]\label{thm:Ddim}
Let a system be a $D$-dimensional clustered lattice (Def.~\ref{def:d_dim_lattice}) with maximum cluster size $d$. Let $U$ be a circuit with cluster-level depth $T$ and range $R$, and let $O = \sum_{\alpha=1}^{m} c_\alpha P_\alpha$ be an $m$-sparse, cluster-level $s$-local observable satisfying $\lambda_c=\sum_{\alpha=1}^{m}|c_\alpha| $.

For the input state $|0\rangle^{\otimes n}$, there exists a quantum algorithm that, for any error tolerance $\epsilon>0$ and with probability at least $2/3$, outputs an estimate $\widetilde\mu$ of $\mu = \langle 0^n|U^\dagger O U|0^n\rangle$ satisfying $|\widetilde\mu - \mu| \le \epsilon$.

The resource requirements of the algorithm are as follows:
\begin{itemize}
    \item \textbf{Sample Complexity:} $\mathcal{O}\left(\dfrac{\lambda_c^2 s^2}{\epsilon^2}\right)$
    \item \textbf{Qubit Requirement:} $\mathcal{O}(s d R^D)$
    \item \textbf{Circuit Depth:} $T$
\end{itemize}
\end{theorem}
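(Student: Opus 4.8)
The plan is to reduce the global expectation value $\mu=\langle 0^n|U^\dagger O U|0^n\rangle$ to a classically-weighted combination of \emph{light-cone–confined} subexpectations, each measurable on a small device, and then to control the overall estimation error by a Monte Carlo argument. First I would use linearity to write $\mu=\sum_{\alpha=1}^m c_\alpha\,\mu_\alpha$ with $\mu_\alpha=\langle 0^n|U^\dagger P_\alpha U|0^n\rangle\in[-1,1]$, and apply \emph{importance sampling} over the Pauli index: draw $\alpha$ with probability $p_\alpha=|c_\alpha|/\lambda_c$ and form the estimator $\hat X=\lambda_c\,\mathrm{sign}(c_\alpha)\,Y_\alpha$, where $Y_\alpha$ is an unbiased estimator of $\mu_\alpha$ with $|Y_\alpha|\le 1$. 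Then $\Ebb[\hat X]=\mu$ and $|\hat X|\le\lambda_c$, so by Hoeffding/Chebyshev $\mathcal{O}(\lambda_c^2/\epsilon^2)$ independent repetitions suffice to reach additive error $\epsilon$ with probability $2/3$. The entire remaining problem is to construct each $Y_\alpha$ using only local hardware, and to quantify how the per-term cost inflates the sample budget.

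Next I would invoke the light-cone truncation of Definition~\ref{def:pauli_light_cone}: since $U^\dagger P_\alpha U=\widetilde U(\mathcal{C}_\alpha)^\dagger P_\alpha\widetilde U(\mathcal{C}_\alpha)$, the quantity $\mu_\alpha$ depends only on clusters in $\mathcal{L}(\mathcal{C}_\alpha)$ and on the input state restricted there. The qubit count then follows from a purely geometric estimate. Because the effective cluster-level velocity is $v_{\mathrm{LR,cluster}}=1$ and the range is bounded by $R$, the light cone $\mathcal{L}(C_j)$ of each single support cluster is contained in a radius-$R$ ball of the $D$-dimensional lattice, which holds $\mathcal{O}(R^D)$ clusters; taking the union over the $|\mathcal{C}_\alpha|\le s$ support clusters bounds the combined cone by $\mathcal{O}(sR^D)$ clusters, i.e. $\mathcal{O}(sdR^D)$ qubits, with the depth inherited from the parent circuit and hence $\le T$. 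This step is exactly what eliminates any dependence on $N$ or $n$.

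I would then exploit \emph{causal decoupling}: partition $\mathcal{L}(\mathcal{C}_\alpha)$ into its connected components under the truncated inter-cluster interaction graph. Since each $\mathcal{L}(C_j)$ is connected, the number of components is $K_\alpha\le|\mathcal{C}_\alpha|\le s$; and because gates act only within components while $|0^n\rangle$ is a product state, both $\widetilde U(\mathcal{C}_\alpha)$ and $P_\alpha$ factorize, giving $\mu_\alpha=\prod_{k=1}^{K_\alpha}\mu_\alpha^{(k)}$ with each $\mu_\alpha^{(k)}\in[-1,1]$ measurable on a single device no larger than the combined cone. Each factor is obtained by preparing the component subcircuit on $|0\rangle$ and measuring $P_\alpha^{(k)}$ in its eigenbasis, returning outcomes $b_k\in\{\pm1\}$ with mean $\mu_\alpha^{(k)}$. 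The estimator $Y_\alpha=\prod_k b_k$ built from one independent outcome per component is unbiased with $|Y_\alpha|\le1$. When the components must instead be run on distinct, uncorrelated devices, one estimates each of the $\le s$ bounded factors separately to precision $\epsilon/(\lambda_c s)$ and multiplies; a telescoping expansion shows the per-factor errors accumulate at most linearly in $s$, so the product is accurate to $\epsilon/\lambda_c$, and combined with the $\lambda_c^2$ prefactor this yields the stated $\mathcal{O}(\lambda_c^2 s^2/\epsilon^2)$ total shots.

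The main obstacle I anticipate is precisely this last error-propagation analysis for the component-product estimator: one must rigorously bound the multiplicative accumulation of errors across the $\le s$ decoupled factors (via the telescoping identity $\prod_k(\mu^{(k)}+\delta_k)-\prod_k\mu^{(k)}$ with $|\mu^{(k)}|\le1$) and verify that importance sampling over $\alpha$ composes with the per-$\alpha$ decoupling without introducing any hidden $m$- or system-size dependence. A secondary but necessary technical point is making the geometric counting fully rigorous, namely establishing the radius-$R$ containment directly from Eq.~\eqref{eq:range_bound}, the $\mathcal{O}(R^D)$ ball-volume bound on $\mathbb{Z}^D$, and the union bound over the $\le s$ support clusters; together these steps are what certify that both the qubit requirement $\mathcal{O}(sdR^D)$ and the sample complexity depend only on the locality $s$, connectivity, and depth, and never on the global size.
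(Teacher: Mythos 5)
Your proposal is correct, and its geometric half coincides with the paper's proof: light-cone truncation of $U^\dagger P_\alpha U$, the radius-$R$ ball count $\mathcal{O}(R^D)$ per support cluster giving the $\mathcal{O}(s d R^D)$ qubit bound, factorization of $\mu_\alpha$ over the $k_\alpha \le s$ causally disconnected components, and depth inherited from $U$. Where you genuinely diverge is the statistical analysis. The paper estimates \emph{every} one of the $m$ terms, multiplies \emph{sample means} $\tilde{\mu}_{\alpha,i}$ built from $K_{\alpha,i}$ shots per component, bounds the variance of that product with an induction lemma (Lemma~\ref{lem:var-upper-bound}, essentially your telescoping identity recast for second moments), and then chooses the allocation $K_{\alpha,i}=\sqrt{\lambda}\,|c_\alpha|$ by Lagrangian optimization, reaching $\mathcal{O}(\lambda_c^2 s^2/\epsilon^2)$ via Chebyshev. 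You instead importance-sample $\alpha$ with probability $|c_\alpha|/\lambda_c$ and use a \emph{single-shot} product estimator $Y_\alpha=\prod_k b_k$ with $b_k\in\{\pm1\}$: since $|Y_\alpha|\le 1$ holds deterministically, the estimator $\hat X=\lambda_c\,\mathrm{sign}(c_\alpha)\,Y_\alpha$ has variance at most $\lambda_c^2$ with no product-variance lemma needed at all. Your route is cleaner and in fact stronger: $\mathcal{O}(\lambda_c^2/\epsilon^2)$ repetitions at $\le s$ circuit executions each yields $\mathcal{O}(s\lambda_c^2/\epsilon^2)$ total shots, a factor of $s$ below the bound claimed in the theorem, which it therefore establishes a fortiori. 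What the paper's version buys in exchange is an unbiased estimate of each individual $\mu_\alpha$ (all terms are measured) and a variance-optimal deterministic shot allocation; what yours buys is the elimination of both the lemma and the Lagrangian computation, with $m$-independence automatic.

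One caution: the fallback variant you sketch at the end — estimating each factor to precision $\epsilon/(\lambda_c s)$ on separate devices and multiplying — is strictly weaker than your main path and should be dropped rather than patched. Run over all $m$ terms it reintroduces $m$-dependence in the shot count and requires a union bound over per-term failure probabilities (hence median-of-means boosting and logarithmic overhead); and it does not compose with your importance sampler, since the $\mathcal{O}(\lambda_c^2/\epsilon^2)$ outer repetitions would \emph{multiply} the per-$\alpha$ cost rather than replace it. The single-shot product estimator already covers the distinct-device setting — the outcomes $b_k$ only need to be statistically independent, not produced on the same processor — so nothing further is required there.
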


To estimate this expectation value $\mu=\langle0^n|U^\dagger OU|0^n\rangle$, for this $s$-local observable $O=\sum_{\alpha=1}^mc_\alpha P_\alpha$, we propose the following Algorithm~\ref{alg:Ddim}.

\begin{algorithm}[Algorithm for Theorem~\ref{thm:Ddim}]
    \label{alg:Ddim}
    \begin{algorithmic}[1]
        \Require Observable $O = \sum_{\alpha=1}^m c_\alpha P_\alpha$, unitary $U$, error tolerance $\epsilon$.
        \Ensure An estimate $\widetilde{\mu}$ of $\mu = \bra{0^n}U^\dagger O U\ket{0^n}$ s.t. $|\widetilde{\mu} - \mu| \le \epsilon$.

        \State Initialize total estimate $\widetilde{\mu} \leftarrow 0$.
        \For{each Pauli term $P_\alpha$ in the decomposition of $O$}
            \State \textbf{Step 1: Construct Light Cone and Components}
            \State Construct the cluster light cone $\mathcal{L}(\mathcal{C}_\alpha)$ for $P_\alpha$ as per Definition~\ref{def:pauli_light_cone}.
            \State Identify the $k_\alpha$ disjoint connected components of $\mathcal{L}(\mathcal{C}_\alpha)$: $\{L_1, \dots, L_{k_\alpha}\}$.
            \State \Comment{This structure implies a factorization: $\mu_\alpha = \prod_{i=1}^{k_\alpha} \mu_{\alpha,i}$}
            
            \State \textbf{Step 2: Estimate Component Expectations}
            \State Initialize term estimate $\widetilde{\mu}_\alpha \leftarrow 1$.
            \For{each component $L_i$ from $i=1$ to $k_\alpha$}
                \State Define the local operator $P_{\alpha,i}$ (the part of $P_\alpha$ on $L_i$).
                \State Construct the corresponding light cone circuit $U_{\alpha,i}$ restricted to $L_i$.
                \State On a quantum processor, estimate $\mu_{\alpha,i} = \bra{0} U_{\alpha,i}^\dagger P_{\alpha,i} U_{\alpha,i} \ket{0}$ to get $\widetilde{\mu}_{\alpha,i}$.
            \EndFor
            
            \State \textbf{Step 3: Classical Aggregation for the Term}
            \State Compute the product $\widetilde{\mu}_\alpha = \prod_{i=1}^{k_\alpha} \widetilde{\mu}_{\alpha,i}$.
            \State Update the total estimate: $\widetilde{\mu} \leftarrow \widetilde{\mu} + c_\alpha \widetilde{\mu}_\alpha$.
        \EndFor
        \State \Return $\widetilde{\mu}$.
    \end{algorithmic}
\end{algorithm}

We provide the proof in Appendix.~\ref{sec:proof_thm_ddim}.



Theorem~\ref{thm:Ddim} confirms the efficacy of exploiting the light cone's geometry. The sample complexity is free from exponential scaling with circuit depth, determined instead by the polynomial parameter $s$.

However, the required qubit count $\mathcal{O}(s d R^D)$ may exceed modular device capacities for connected light cones. This motivates our second approach.

\begin{figure}[tb]
  \centering
  \includegraphics[width=\linewidth]{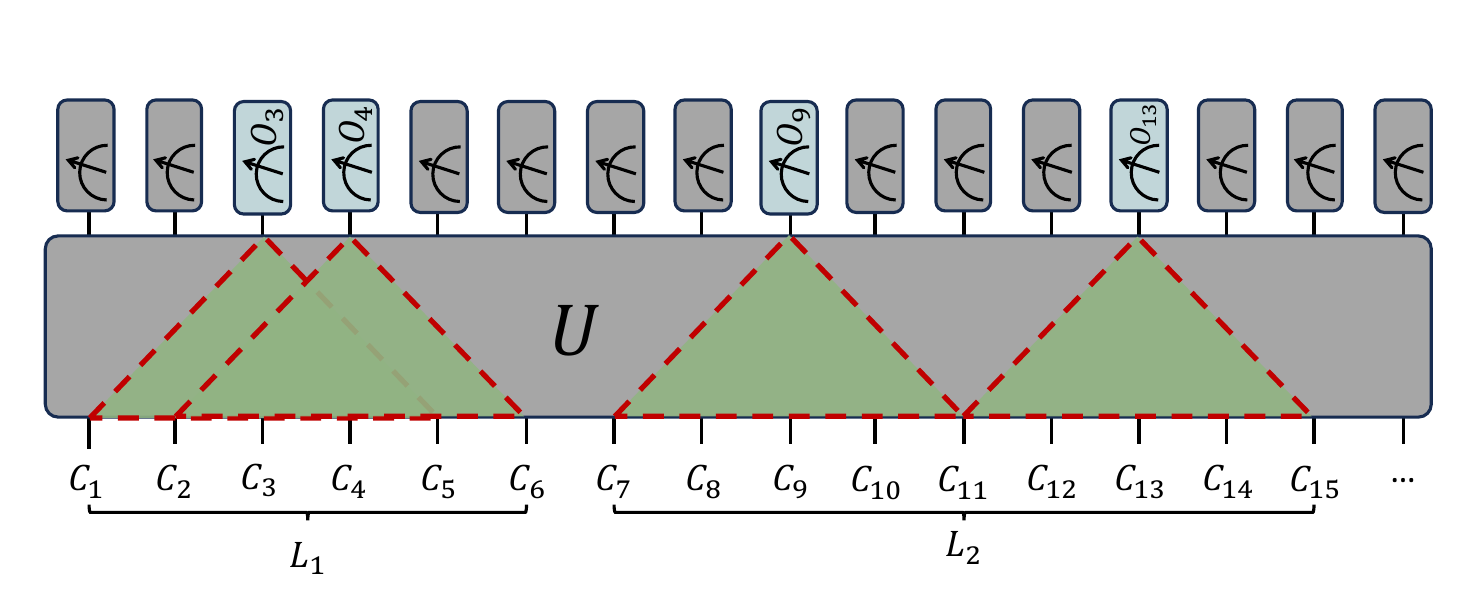}
  \caption{Illustration of the cluster‐level light‐cone decomposition for an \(s\)-local Pauli operator acting on a one‐dimensional chain of clusters under a shallow circuit \(U\).  The pale blue boxes mark the clusters \(\mathcal C_\alpha\) where the Pauli term acts non-trivially.  For each such cluster \(C_j\), its causal light cone \(L(C_j)\) under \(U\) is shown in green.  The union of overlapping light cones splits into connected supports \(L_1\) and \(L_2\), whose boundaries are outlined in red dashes.  We simulate each subcircuit \(U_{\mathrm{loc}}^{(\alpha,i)}\) by retaining only the gates within the corresponding support \(L_i\), since gates outside cancel between \(U\) and \(U^\dagger\).}
  \label{fig:cluster_light_cone}
\end{figure}

\subsection{Algebraic Decomposition Algorithm}
To address scenarios where the hardware resources are strictly limited—specifically, where one only has access to small, single-cluster devices—we introduce the Algebraic Decomposition Algorithm. This method takes the light-cone framework a step further. Instead of stopping at the geometric boundaries of the light cone components, it algebraically decomposes the inter-cluster gates \emph{within} the light cone itself.

We state our main result for this approach in Theorem~\ref{thm:clustered_systems}. This theorem establishes complexity bounds that trade the hardware requirements of the previous method for a sample complexity that depends on the volume of the light cone.

\begin{theorem}[Local Observable on Clustered Systems]
\label{thm:clustered_systems}
Let a system be composed of clusters with a maximum cluster size of $d$. Let $U$ be a quantum circuit with a cluster-level depth $T$, and let $O = \sum_{\alpha=1}^{m} c_\alpha P_\alpha$ be an $m$-sparse, cluster-level $s$-local observable satisfying $\lambda_c=\sum_{\alpha=1}^{m}|c_\alpha|$.

For the input state $\ket{0}^{\otimes n}$, there exists a quantum algorithm that, for any error tolerance $\epsilon>0$ and with probability at least $2/3$, outputs an estimate $\widetilde\mu$ of $\mu = \bra{0^n}U^\dagger O U\ket{0^n}$ satisfying $|\widetilde\mu - \mu| \le \epsilon$.

The resource requirements are determined by the properties of the largest light cone generated by any Pauli term in $O$. Let $\mathrm{Size}_{\max}$ be the maximum number of clusters in any such light cone, and let $\mathrm{Vol}_{\max}$ be its maximum volume (as defined in Def.~\ref{def:pauli_light_cone}):
\begin{align}
    \mathrm{Size}_{\max} &:= \max_{\alpha} |\mathcal{L}(\mathcal{C}_\alpha)|, \\
    \mathrm{Vol}_{\max} &:= \max_{\alpha} \mathrm{Vol}(\mathcal{L}(\mathcal{C}_\alpha)).
\end{align}
The requirements of the algorithm are as follows:
\begin{itemize}
    \item \textbf{Sample Complexity:}\\ $\mathcal{O}\left(\dfrac{\lambda_c^2 \cdot \mathrm{Size}_{\max}^2 \cdot 2^{\mathcal{O}(\mathrm{Vol}_{\max})}}{\epsilon^2}\right)$
    \item \textbf{Qubit Requirement:} $d+1$
    \item \textbf{Circuit Depth:} 
    {$2T+1$}
\end{itemize}
\end{theorem}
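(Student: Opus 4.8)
The plan is to reduce the global expectation value to a classically weighted mixture of single-cluster expectation values, by first restricting to each Pauli term's light-cone circuit and then algebraically splitting every inter-cluster gate into a sum of products of local operators. This is essentially a localized, Heisenberg-picture version of quasiprobability gate cutting: because the decomposition is applied only to the inter-cluster gates \emph{inside} the light cone, its overhead is governed by $\mathrm{Vol}_{\max}$ rather than the system size. Concretely, by Def.~\ref{def:pauli_light_cone} we have $U^\dagger P_\alpha U = \widetilde U(\mathcal C_\alpha)^\dagger P_\alpha \widetilde U(\mathcal C_\alpha)$, so $\mu = \sum_\alpha c_\alpha \mu_\alpha$ with $\mu_\alpha = \bra{0^n}\widetilde U(\mathcal C_\alpha)^\dagger P_\alpha \widetilde U(\mathcal C_\alpha)\ket{0^n}$ supported on at most $\mathrm{Size}_{\max}$ clusters; the only non-local objects remaining are the inter-cluster two-qubit gates, of which $\mathrm{Vol}(\mathcal L(\mathcal C_\alpha))$ appear in $\widetilde U$ and as many again in $\widetilde U^\dagger$.

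\emph{Algebraic decomposition.} I expand each inter-cluster gate in the two-qubit Pauli basis, $V = \sum_{p,q} c_{pq}\,\sigma_p\otimes\sigma_q$, a sum of at most $16$ terms whose coefficient $1$-norm is bounded by a constant $\gamma=O(1)$. Substituting this for all $2\,\mathrm{Vol}(\mathcal L(\mathcal C_\alpha))$ inter-cluster gates and distributing, the operator $\widetilde U(\mathcal C_\alpha)^\dagger P_\alpha \widetilde U(\mathcal C_\alpha)$ becomes $\sum_{\vec b} a_{\vec b}\bigotimes_{C_k} O^{(k)}_{\vec b}$, a weighted sum of cluster-factorized operators in which each local factor $O^{(k)}_{\vec b}$ is a product of intra-cluster unitaries and inserted Paulis, hence itself unitary. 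The number of configurations $\vec b$ is at most $16^{2\mathrm{Vol}}=2^{O(\mathrm{Vol}_{\max})}$ and $\sum_{\vec b}|a_{\vec b}|\le \gamma^{2\mathrm{Vol}}=2^{O(\mathrm{Vol}_{\max})}$.

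\emph{Single-cluster estimation and sampling.} Since $\ket{0^n}=\bigotimes_k\ket{0_{C_k}}$, each configuration factorizes as $\prod_k \bra{0_{C_k}}O^{(k)}_{\vec b}\ket{0_{C_k}}$, a product of at most $\mathrm{Size}_{\max}$ amplitudes of unitaries in $\ket{0}$. Each amplitude is estimated by a Hadamard test on a single cluster using $d+1$ qubits, where the tested unitary $O^{(k)}_{\vec b}$ comprises at most $T$ forward and $T$ backward local layers with one inserted Pauli, giving depth $2T+1$; randomizing over the real and imaginary variants yields an unbiased single-shot estimate of each complex factor. Writing $\mu=\sum_{\alpha,\vec b} c_\alpha a_{\vec b}\prod_k\bra{0}O^{(k)}_{\vec b}\ket{0}$, I importance-sample $(\alpha,\vec b)$ with probability proportional to $|c_\alpha a_{\vec b}|$, whose total mass is $\Lambda=\lambda_c\cdot 2^{O(\mathrm{Vol}_{\max})}$. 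For each sampled configuration I spend $O(\mathrm{Size}_{\max})$ Hadamard-test shots per factor; because the factors are independent and lie in the unit disk, their product is estimated without bias and with $O(1)$ variance at a cost of $O(\mathrm{Size}_{\max}^2)$ shots. A Chebyshev (or median-of-means) argument then gives total sample complexity $O(\Lambda^2\,\mathrm{Size}_{\max}^2/\epsilon^2)=O(\lambda_c^2\,\mathrm{Size}_{\max}^2\,2^{O(\mathrm{Vol}_{\max})}/\epsilon^2)$ at confidence $2/3$.

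\emph{Main obstacle.} The crux is the variance analysis of the product estimator: a naive single-shot product of $\mathrm{Size}_{\max}$ independently measured factors would incur variance exponential in $\mathrm{Size}_{\max}$, so I must show that estimating each factor to a coarse precision with $O(\mathrm{Size}_{\max})$ shots and then multiplying keeps the product unbiased while controlling its variance to $O(1)$ — this is precisely what converts the blowup into the polynomial $\mathrm{Size}_{\max}^2$ overhead. A secondary but essential check is that the per-gate decomposition $1$-norm is a genuine constant $\gamma=O(1)$, so that the accumulated overhead is $2^{O(\mathrm{Vol}_{\max})}$ and no worse, together with the observation that every resulting local factor is unitary, which is what licenses the $(d+1)$-qubit, depth-$(2T+1)$ Hadamard test and ensures each single-cluster amplitude lies in the unit disk.
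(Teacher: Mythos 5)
Your proposal is correct and shares the paper's core mechanism: restriction to the light-cone circuit $\widetilde U(\mathcal C_\alpha)$, algebraic expansion of the inter-cluster gates into intra-cluster terms with constant per-gate $1$-norm (hence total $1$-norm $2^{\mathcal{O}(\mathrm{Vol}_{\max})}$), cluster-by-cluster factorization into matrix elements of unitaries, and $(d+1)$-qubit Hadamard tests with real/imaginary variants. Where you genuinely differ is the statistical layer. The paper decomposes the unitary itself, $\widetilde U(\mathcal C_\alpha)=\sum_j w_j W_j$, deterministically enumerates all $N_\alpha^2$ pairs $(j,k)$, and assigns shots to every term via a Lagrangian optimization ($K_{jkl}\propto|c_\alpha|$), controlling each product estimator through an explicit second-moment inequality. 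You instead expand the Heisenberg-evolved operator directly into configurations $\vec b$ and importance-sample $(\alpha,\vec b)$ with probability proportional to $|c_\alpha a_{\vec b}|$, paying the total mass $\Lambda=\lambda_c\, 2^{\mathcal{O}(\mathrm{Vol}_{\max})}$ quadratically via Chebyshev; this avoids enumerating the exponentially many terms and makes the role of the decomposition $1$-norm more transparent, while landing on the same bound. Your explicit insistence on $\mathcal{O}(\mathrm{Size}_{\max})$ shots per factor — so that $\prod_l \mathbb{E}\,|\tilde\mu_l|^2 = \bigl(1+\mathcal{O}(1/K)\bigr)^{\mathrm{Size}_{\max}}$ stays $\mathcal{O}(1)$ — is exactly the point the paper leaves implicit: its bound $\mathrm{Var}(\tilde\eta_{jk})\le \mathcal{O}(1)\sum_l 1/K_{jkl}$ tacitly requires the per-factor second moments' product to be $\mathcal{O}(1)$, which holds only because the optimized allocation makes $K_{jkl}$ large. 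Finally, your depth accounting ($T$ controlled forward layers, $T$ controlled backward layers, one inserted Pauli, giving $2T+1$) matches the theorem's stated depth, whereas the paper's appendix asserts depth $T$, inconsistent with its own theorem statement.
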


Algorithm~\ref{alg:gate_decomposition} achieves these bounds by leveraging the causal structure defined in Sec.~\ref{sec:bg}. As shown in Fig.~\ref{fig:gate_decom}, the strategy restricts evolution to the light cone and algebraically decomposes the entangling inter-cluster gates into a sum of local operations. This transforms the complex global evolution into a set of independent, single-cluster tasks solvable on minimal hardware.

\begin{figure}[tb]
    \centering
    \includegraphics[width=\linewidth]{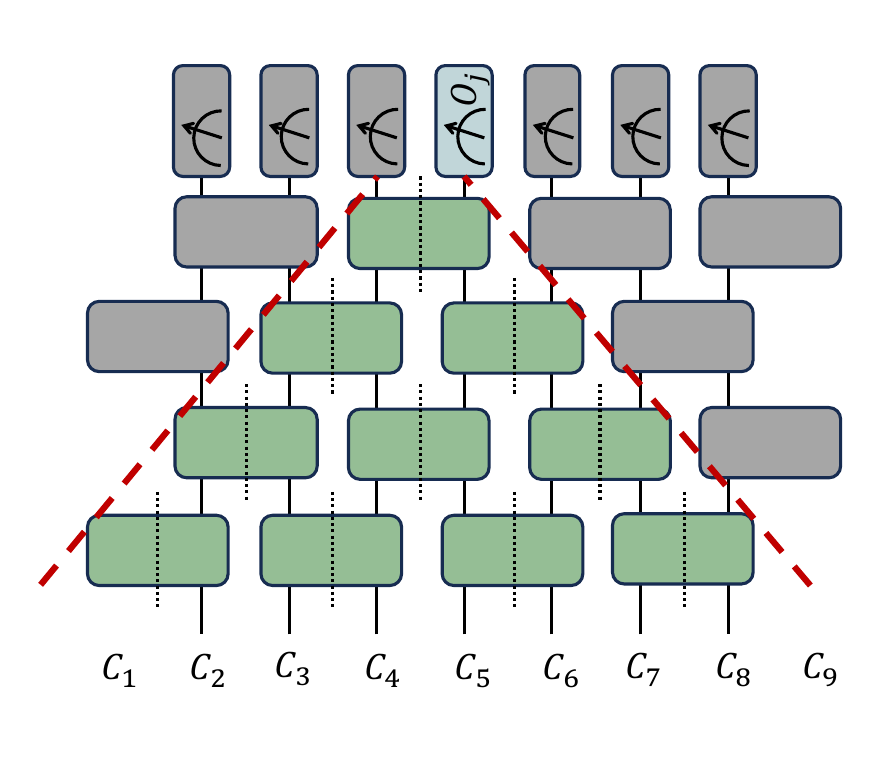}
    \caption{Causal light cone and task decomposition in a cluster-level brick-wall circuit. Each vertical line represents a quantum cluster $C_i$. For a local observable $O_j$ measured on a target cluster $C_j$ (here $C_5$) at the final time step, the support of its Heisenberg evolution, $U^\dagger O_j U$, is confined to a causal light cone (bounded by the red dashed line). Only the inter-cluster gates within this cone (green blocks) contribute to the measurement outcome, while gates outside of it (gray blocks) can be disregarded. Crucially, each gate within the light cone is decomposed, which allows the complex global evolution to be transformed into a series of local sub-problems. Each sub-problem can then be solved independently on a small quantum computer encompassing only the clusters within that light cone.}
    \label{fig:gate_decom}
\end{figure}

\begin{algorithm}[Algorithm via Inter-Cluster Gate Decomposition]
    \label{alg:gate_decomposition}
    \begin{algorithmic}[1]
        \Require Observable $O = \sum_{\alpha=1}^m c_\alpha P_\alpha$, unitary $U$, error tolerance $\epsilon$.
        \Ensure An estimate $\widetilde{\mu}$ of $\mu = \bra{0^n}U^\dagger O U\ket{0^n}$ s.t. $|\widetilde{\mu} - \mu| \le \epsilon$.

        \State Initialize total estimate $\widetilde{\mu} \leftarrow 0$.
        \For{each Pauli term $P_\alpha$ in the decomposition of $O$}
            \State \textbf{Step 1: Identify Light Cone Circuit}
            \State Construct the light cone $\mathcal{L}(\mathcal{C}_\alpha)$ and the corresponding light cone circuit $\widetilde{U}(\mathcal{C}_\alpha)$ for the term $P_\alpha$, as per Definition~\ref{def:pauli_light_cone}. Let $\text{Vol}_\alpha$ be the volume of this light cone.

            \State \textbf{Step 2: Decompose the Light Cone Circuit}
            \State Decompose every inter-cluster gate within $\widetilde{U}(\mathcal{C}_\alpha)$ into a linear combination of local operators. This transforms the unitary $\widetilde{U}(\mathcal{C}_\alpha)$ into a sum of circuits composed entirely of intra-cluster gates:
            \begin{equation*}
                \widetilde{U}(\mathcal{C}_\alpha) = \sum_{j=1}^{N_\alpha} w_j W_j
            \end{equation*}
            \State where each $W_j$ is a product of intra-cluster gates, and the number of terms $N_\alpha$ scales as $2^{\mathcal{O}(\text{Vol}_\alpha)}$.
            
            \State \textbf{Step 3: Evaluate Decomposed Expectation Value}
            \State The expectation $\mu_\alpha = \bra{0}\widetilde{U}(\mathcal{C}_\alpha)^\dagger P_\alpha \widetilde{U}(\mathcal{C}_\alpha)\ket{0}$ expands to $\sum_{j,k=1}^{N_\alpha} w_j^* w_k \bra{0} W_j^\dagger P_\alpha W_k \ket{0}$.
            \State Initialize term estimate $\widetilde{\mu}_\alpha \leftarrow 0$.
            \For{each pair of local circuits $(W_j, W_k)$}
                \State Let the term to estimate be $\eta_{jk} = \bra{0} W_j^\dagger P_\alpha W_k \ket{0}$, which factorizes over the $\text{Size}_\alpha$ clusters: $\eta_{jk} = \prod_{l=1}^{\text{Size}_\alpha} \mu_{jk,l}$, where $\mu_{jk,l} = \bra{\psi_{j,l}} P_{\alpha,l} \ket{\psi_{k,l}}$.
                \State Here {$\bra{\psi_{j,l}} P_{\alpha,l} \ket{\psi_{k,l}}$ is the matrix element of $P_{\alpha,l}$ (the restriction of $P_\alpha$ to cluster $l$) between states $\ket{\psi_{j,l}}$ and $\ket{\psi_{k,l}}$, prepared by $W_j$ and $W_k$ respectively.}
                \State Initialize product estimate $\widetilde{\eta}_{jk} \leftarrow 1$.
                \For{each cluster $l=1$ to $\text{Size}_\alpha$}
                    \State \Comment{Use Hadamard test on a small quantum processor with an ancilla qubit.}
                    \State \textbf{a) Estimate Real Part:}
                    \Statex \quad Construct the standard Hadamard test circuit for $\mu_{jk,l}$. This involves controlled applications of $V_{k,l}$ and $P_{\alpha,l}V_{j,l}$.
                    \Statex \quad Execute the circuit multiple times to estimate the ancilla's expectation value $\langle Z \rangle_{\text{anc}}$.
                    \Statex \quad Set $\widetilde{\text{Re}}(\mu_{jk,l}) \leftarrow \langle Z \rangle_{\text{anc}}$.
                    
                    \State \textbf{b) Estimate Imaginary Part:}
                    \Statex \quad Construct a modified Hadamard test circuit that includes an $S^\dagger$ gate on the ancilla before the final Hadamard gate.
                    \Statex \quad Execute this circuit to estimate the new ancilla expectation value $\langle Z \rangle'_{\text{anc}}$.
                    \Statex \quad Set $\widetilde{\text{Im}}(\mu_{jk,l}) \leftarrow \langle Z \rangle'_{\text{anc}}$.
                    
                    \State \textbf{c) Combine and Update:}
                    \Statex \quad Form the complex estimate $\widetilde{\mu}_{jk,l} = \widetilde{\text{Re}}(\mu_{jk,l}) + i \cdot \widetilde{\text{Im}}(\mu_{jk,l})$.
                    \Statex \quad Update the product: $\widetilde{\eta}_{jk} \leftarrow \widetilde{\eta}_{jk} \times \widetilde{\mu}_{jk,l}$.
                \EndFor
                \State Update the estimate for the Pauli term: $\widetilde{\mu}_\alpha \leftarrow \widetilde{\mu}_\alpha + w_j^* w_k \widetilde{\eta}_{jk}$.
            \EndFor

            \State \textbf{Step 4: Classical Aggregation}
            \State Update the total estimate: $\widetilde{\mu} \leftarrow \widetilde{\mu} + c_\alpha \widetilde{\mu}_\alpha$.
        \EndFor
        \State \Return $\widetilde{\mu}$.
    \end{algorithmic}
\end{algorithm}

The general complexity bounds in Theorem~\ref{thm:clustered_systems} depend on the abstract quantities $\text{Size}_{\max}$ and $\text{Vol}_{\max}$. To provide concrete performance estimates and understand the algorithm's behavior in practical scenarios, we now specialize these bounds to two paradigmatic architectures defined in our background section: geometrically local lattices and fully connected systems. These corollaries translate the abstract light cone properties into scaling laws dependent on physical parameters like circuit depth and system connectivity.

\begin{corollary}[Complexity on D-Dimensional Lattices]
\label{cor:lattice}
Consider a system arranged on a $D$-dimensional clustered lattice with a circuit of depth $T$ and cluster-level range $\mathcal{R}(U)=R$ (as per Def.~\ref{def:cluster_range}). We further assume that the number of elementary two-qubit gates applied between any pair of adjacent clusters per circuit layer is bounded by a constant. For an $s$-local observable, the sample complexity of Algorithm~\ref{alg:gate_decomposition} is then bounded by:
\begin{itemize}
    \item \textbf{Sample Complexity:}\\ $\mathcal{O}\left(\dfrac{\lambda_c^2 \cdot s^2 R^{2D} \cdot 2^{\mathcal{O}(sTR^{D})}}{\epsilon^2}\right)$
\end{itemize}
The qubit and depth requirements remain as stated in Theorem~\ref{thm:clustered_systems}. The proof is provided in Appendix~\ref{app:proof_corollaries}.
\end{corollary}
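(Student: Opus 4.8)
The plan is to obtain the corollary purely by specializing the two abstract quantities $\mathrm{Size}_{\max}$ and $\mathrm{Vol}_{\max}$ appearing in Theorem~\ref{thm:clustered_systems} to the geometry of a $D$-dimensional cubic lattice, and then substituting the resulting bounds into the general sample-complexity expression. No new quantum-algorithmic ingredient is required; the entire content is a geometric counting argument layered on top of the already-proven theorem, with the dimension $D$ treated as a fixed constant.

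First I would bound $\mathrm{Size}_{\max}$. By Definition~\ref{def:cluster_range}, the range constraint $\mathcal{R}(U)=R$ guarantees that every single-cluster light cone $\mathcal{L}(C_j)$ is contained in the $\ell_1$-ball of radius $R$ centred at $C_j$ (consistent with the linear-light-cone bound $R \le T$ of Eq.~\eqref{eq:range_bound}). On $\mathbb{Z}^D$ the cardinality of such a ball is $\mathcal{O}(R^D)$, since the number of lattice sites at $\ell_1$-distance $k$ grows like $k^{D-1}$ and summing over $k \le R$ yields a degree-$D$ polynomial. Because $P_\alpha$ is $s$-local, its support $\mathcal{C}_\alpha$ touches at most $s$ clusters, and the combined light cone is the union $\mathcal{L}(\mathcal{C}_\alpha)=\bigcup_{C_j \in \mathcal{C}_\alpha}\mathcal{L}(C_j)$ of Definition~\ref{def:pauli_light_cone}. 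A union bound then gives $|\mathcal{L}(\mathcal{C}_\alpha)| \le s\cdot \mathcal{O}(R^D) = \mathcal{O}(sR^D)$, so $\mathrm{Size}_{\max}=\mathcal{O}(sR^D)$ and $\mathrm{Size}_{\max}^2=\mathcal{O}(s^2 R^{2D})$, reproducing the polynomial prefactor.

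Next I would bound $\mathrm{Vol}_{\max}$, the total number of inter-cluster two-qubit gates inside the light-cone circuit. At each of the $T$ layers, the surviving inter-cluster gates act only on adjacent cluster pairs whose endpoints both lie in $\mathcal{L}(\mathcal{C}_\alpha)$. On the cubic lattice each cluster has at most $2D$ neighbours, so the number of such edges is at most $D\cdot |\mathcal{L}(\mathcal{C}_\alpha)| = \mathcal{O}(sR^D)$; invoking the corollary's hypothesis that the number of elementary gates per adjacent pair per layer is bounded by a constant, the per-layer gate count is still $\mathcal{O}(sR^D)$. Summing over the $T$ layers yields $\mathrm{Vol}_{\max}=\mathcal{O}(sTR^D)$, and hence $2^{\mathcal{O}(\mathrm{Vol}_{\max})}=2^{\mathcal{O}(sTR^D)}$.

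Substituting both bounds into the sample-complexity formula of Theorem~\ref{thm:clustered_systems} then gives $\mathcal{O}\!\left(\lambda_c^2 s^2 R^{2D}\,2^{\mathcal{O}(sTR^D)}/\epsilon^2\right)$ as claimed, while the qubit count $d+1$ and depth $2T+1$ are inherited unchanged from the theorem. I expect the main obstacle to be the geometric counting itself: carefully verifying the $\mathcal{O}(R^D)$ ball cardinality on $\mathbb{Z}^D$ and the incident-edge count within the union region, and ensuring that the fixed dimension $D$ and the per-edge gate constant are consistently absorbed into the big-$\mathcal{O}$ factors rather than silently dropped. A secondary subtlety is that the $s$ individual cones may overlap, so the union bound is loose; one should confirm that no tighter accounting is needed to match the stated exponents, which it is not, since the loose bound already yields exactly $s^2 R^{2D}$ and $sTR^D$.
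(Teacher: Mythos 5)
Your proposal is correct and follows essentially the same route as the paper's proof: bound $\mathrm{Size}_{\max}=\mathcal{O}(sR^D)$ by counting lattice clusters within a ball of radius $R$ unioned over the at most $s$ supported clusters, bound $\mathrm{Vol}_{\max}=\mathcal{O}(T\cdot\mathrm{Size}_{\max})=\mathcal{O}(sTR^D)$, and substitute both into the general bound of Theorem~\ref{thm:clustered_systems}. If anything, your per-layer edge-counting argument (at most $2D$ neighbours per cluster and a constant number of gates per adjacent pair per layer) makes the volume bound more explicit than the paper's appeal to a ``roughly uniform distribution of gates,'' so there is no gap.
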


\begin{corollary}[Complexity on All-to-All Connected Systems]
\label{cor:all_to_all}
Consider a clustered system with all-to-all connectivity, and assume that the number of elementary gates applied between any pair of clusters per circuit layer is bounded by a constant. To characterize the growth of the operator support, we introduce a \emph{branching factor} $\beta > 1$. The support size for a single initial operator after $j$ layers is assumed to grow as $|s_j| \approx \min\{N, \beta^j\}$. For a circuit of depth $T$ and an $s$-local observable, the sample complexity of Algorithm~\ref{alg:gate_decomposition} is bounded by:
\begin{itemize}
    \item \textbf{Sample Complexity:} $\mathcal{O}\left(\dfrac{\lambda_c^2 \cdot s^2 \beta^{2T} \cdot 2^{\mathcal{O}(sT\beta^T)}}{\epsilon^2}\right)$
\end{itemize}
The qubit and depth requirements remain as stated in Theorem~\ref{thm:clustered_systems}. The proof is provided in Appendix~\ref{app:proof_corollaries}.
\end{corollary}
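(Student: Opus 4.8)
The plan is to derive the corollary directly from Theorem~\ref{thm:clustered_systems} by specializing the two abstract light-cone quantities $\mathrm{Size}_{\max}$ and $\mathrm{Vol}_{\max}$ to the all-to-all branching model, and then substituting the resulting bounds into the general sample-complexity formula $\mathcal{O}(\lambda_c^2 \, \mathrm{Size}_{\max}^2 \, 2^{\mathcal{O}(\mathrm{Vol}_{\max})}/\epsilon^2)$. Since the qubit count $d+1$ and the depth $2T+1$ are inherited unchanged from the theorem, the entire argument reduces to two counting estimates on the light cone of an arbitrary Pauli term.

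First I would bound $\mathrm{Size}_{\max}$. For an $s$-local Pauli term $P_\alpha$, the combined light cone $\mathcal{L}(\mathcal{C}_\alpha)$ is, by Def.~\ref{def:pauli_light_cone}, the union of the individual light cones of its at most $s$ support clusters. Under the branching hypothesis $|s_j| \approx \min\{N,\beta^j\}$, each such individual light cone contains at most $\min\{N,\beta^T\} \le \beta^T$ clusters after the full depth-$T$ evolution. A union bound over the $\le s$ seed clusters then gives $\mathrm{Size}_{\max} \le s\,\beta^T$, so that $\mathrm{Size}_{\max}^2 = \mathcal{O}(s^2\beta^{2T})$, which is exactly the polynomial prefactor appearing in the claim.

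Next I would bound $\mathrm{Vol}_{\max}$, the total count of inter-cluster gates inside the light-cone circuit. The key observation is that within any single circuit layer the gates act on pairwise-disjoint qubits, and by hypothesis at most a constant number of gates connect any fixed pair of clusters; hence the number of inter-cluster gates at a given layer is at most linear in the number of clusters then present in the cone, which never exceeds $\mathrm{Size}_{\max} = \mathcal{O}(s\beta^T)$. Summing this per-layer estimate over all $T$ layers yields $\mathrm{Vol}_{\max} = \mathcal{O}(T \cdot s\beta^T) = \mathcal{O}(sT\beta^T)$, so that $2^{\mathcal{O}(\mathrm{Vol}_{\max})} = 2^{\mathcal{O}(sT\beta^T)}$. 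Inserting both estimates into the formula of Theorem~\ref{thm:clustered_systems} reproduces the stated bound $\mathcal{O}(\lambda_c^2 s^2\beta^{2T} 2^{\mathcal{O}(sT\beta^T)}/\epsilon^2)$.

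The step I expect to require the most care is the volume estimate, specifically the passage from the layer-local support to the global factor $T \cdot \mathrm{Size}_{\max}$. One must argue that all-to-all connectivity does \emph{not} permit a quadratic $\binom{\mathrm{Size}_{\max}}{2}$ blow-up of gates per layer: this is prevented precisely because a physical circuit layer applies gates on disjoint qubit supports, so each of the at most $d$ qubits in a cluster participates in $O(1)$ inter-cluster gates, keeping the per-layer count linear in the cone size. Using the coarse bound $\mathrm{Size}_{\max}$ (rather than the tighter layer-dependent $\beta^t$) at every layer is what produces the explicit factor of $T$ in the exponent; the tighter geometric sum would only sharpen the constant, so the stated bound holds as a valid upper estimate regardless.
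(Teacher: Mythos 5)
Your proposal is correct and follows essentially the same route as the paper's proof: bound $\mathrm{Size}_{\max} = \mathcal{O}(s\beta^T)$ via a union over the at most $s$ seed clusters, bound $\mathrm{Vol}_{\max} = \mathcal{O}(T\cdot\mathrm{Size}_{\max}) = \mathcal{O}(sT\beta^T)$, and substitute both into the general formula of Theorem~\ref{thm:clustered_systems}. Your disjoint-support argument for why the per-layer gate count is linear (not quadratic) in the cone size is a welcome sharpening of a step the paper simply asserts as a ``simplified time-integration,'' though note the per-layer count is really $\mathcal{O}(d\cdot\mathrm{Size}_{\max})$, so the constant hidden in the exponent $2^{\mathcal{O}(sT\beta^T)}$ implicitly absorbs the cluster size $d$, in both your argument and the paper's.
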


Theorem~\ref{thm:clustered_systems} quantifies the trade-off achieved by this deeper decomposition. By reducing the quantum hardware requirement to its absolute minimum—$\mathcal{O}(d)$ qubits, sufficient for a single cluster—we enable simulation on highly distributed architectures.

The cost for this architectural flexibility is a sample complexity that scales as $\mathcal{O}(2^{\mathcal{O}(\text{Vol}_{\max})})$. Unlike the Causal Decoupling method, this cost depends exponentially on the circuit depth (via the volume). However, for the important class of shallow-depth circuits, this overhead remains constant and manageable. This establishes the Algebraic Decomposition Algorithm as the viable strategy when quantum hardware is fragmented, complementing the geometry-based approach of the previous section.



\section{Discussion}
\label{sec:discuss}
\subsection{Interpretation of the Main Results}

Our two main theoretical results, presented in Theorem~\ref{thm:Ddim} and Theorem~\ref{thm:clustered_systems}, establish a framework for estimating local observables on large clustered quantum systems. The significance of this framework lies in its strategy to leverage the principle of causality to decouple the computational cost from the global system size $n$. Both algorithms presented achieve this by focusing computational resources exclusively within the causally relevant light cone of the observable.

Both algorithms demonstrate that the simulation cost depends on the light cone properties rather than the global state space.
Instead, the cost is determined by the properties of the observable's light cone—a region whose size is governed by the circuit's depth and connectivity, not the total number of qubits. This provides a rigorous guarantee that probing local properties of large quantum systems can be a feasible task, which is a requirement for applications from condensed matter physics to quantum chemistry.

While unified by this common approach, our two theorems reveal a trade-off in how the computational burden is distributed between quantum and classical resources. This trade-off is dictated by the depth to which the light cone structure is decomposed and the generality of the applicable architectures.

\paragraph{Causal Decoupling: Sampling Efficiency with Larger Quantum Hardware.}
Theorem~\ref{thm:Ddim} presents a strategy that prioritizes sampling efficiency by exploiting the geometric structure of the light cone. The sample complexity, $\mathcal{O}(\lambda_c^2 s^2/\epsilon^2)$, exhibits no exponential dependence on any circuit parameter. This efficiency relies on factorizing the simulation based on the light cone's connected components, a strategy most naturally applied to $D$-dimensional lattices where geometric locality is strictly defined. The cost is shifted to the quantum hardware: the qubit requirement, $\mathcal{O}(s d R^D)$, scales with the size of these components. This makes it a suitable approach for lattice systems when the light cone may be structurally sparse or disconnected, provided a sufficiently large quantum processor is available.

\paragraph{Algebraic Decomposition: Minimal Quantum Hardware with Sampling Overhead.}
In contrast, Theorem~\ref{thm:clustered_systems} quantifies a strategy that further decomposes the light cone algebraically. This approach is more general, applicable to any clustered architecture where a light cone volume can be defined, not just regular lattices. By decomposing the inter-cluster gates within the light cone, it reduces the qubit requirement to $\mathcal{O}(d)$, constant with respect to any light cone property. The trade-off for this hardware minimization and broader applicability is a sample complexity that scales as $\mathcal{O}(2^{\mathcal{O}(\text{Vol}_{\max})})$. This exponential dependence makes the method suitable for circuits with small light cone volumes, while its minimal footprint allows for simulation on highly distributed or irregular quantum architectures.

\subsection{Comparison with Other Methods}

To contextualize our main results, we compare our two protocols—the Causal Decoupling Algorithm (Theorem~\ref{thm:Ddim}) and the Algebraic Decomposition Algorithm (Theorem~\ref{thm:clustered_systems})—with the prominent technique of circuit cutting or quasi-probability simulation methods~\cite{peng2020simulating, mitarai2021constructing}. All three methods enable the simulation of a quantum system larger than the available hardware, yet they employ fundamentally different decomposition strategies, leading to distinct resource trade-offs.

We consider a canonical task: estimating the expectation value of an $s$-local observable on an $n$-qubit, $D$-dimensional clustered system after evolution by a circuit of depth $T$ and range $R$. The resource requirements for the three approaches are summarized in Table~\ref{tab:comparison}. For circuit cutting, we assume the system is partitioned along cluster boundaries.

\begin{table*}[ht]
    \centering
    \caption{Resource scaling comparison. (Common prefactors $\lambda_c^2/\epsilon^2$ are omitted from Sample Complexity for clarity.)}
    \label{tab:comparison}
    \renewcommand{\arraystretch}{1.5} 
    \begin{tabular}{l|c|c|c}
        \hline\hline
        \textbf{Resource} & \textbf{Gate Cutting~\cite{mitarai2021constructing}} & \textbf{Our Causal Decoupling} & \textbf{Our Algebraic Decomposition} \\
        \hline
        \textbf{Decomposition} & Spatial partitioning of the & Causal factorization of the & Algebraic expansion of the \\
        \textbf{Strategy} & global quantum state & light cone simulation & light cone evolution \\
        \hline
        \textbf{Qubit Requirement} & $d$ (per fragment) & $\mathcal{O}(s d R^D)$ & $d+1$ \\
        \hline
        \textbf{Circuit Depth} & $T$ & $T$ & 
        {$2T+1$} \\
        \hline
        \textbf{Sample Complexity} & $\mathcal{O}\left({N^2 \cdot 2^{\mathcal{O}(TD N)}}\right)$ & $\mathcal{O}\left({s^2}\right)$ & $\mathcal{O}\left({s^2 R^{2D} \cdot 2^{\mathcal{O}(sTR^D)}}\right)$ \\
        \hline
        \textbf{Dominant} & Exponential in global & Polynomial in observable & Exponential in light cone \\
        \textbf{Scaling Factor} & cluster count $N\sim n/d$ & locality $s$ & volume $\propto sTR^D$ \\
        \hline\hline
    \end{tabular}
\end{table*}
Table~\ref{tab:comparison} outlines a three-way trade-off between sampling overhead and quantum hardware requirements, which highlights the distinct characteristics of our two light-cone-based methods.

The primary challenge for circuit cutting is its exponential scaling with the global system size. The number of cuts required to partition the system grows with the boundary of the partitions. For a $D$-dimensional system of $N$ clusters, this boundary scales as $\mathcal{O}(T D (N-N^{\frac{D-1}{D}}))=\mathcal{O}(TDN)$, leading to a sample complexity that is exponential in a function of the total number of clusters, $N\sim n/d$. For any large-scale system where $n \gg d$, this overhead can become prohibitively large, potentially rendering the simulation intractable.

Both of our light-cone-based methods address this issue. By confining the simulation to the causal past of the observable, they eliminate any dependence on the global cluster count $N$ from the exponential factor in the sample complexity. The cost is instead tied to the properties of the light cone, which are governed by the circuit depth $T$ and range $R$, not the total system size $n$. This makes these protocols exponentially more efficient than circuit cutting for probing local physics in large-scale quantum systems.

Our two methods, in turn, offer a clear choice depending on the available quantum hardware and the specific circuit parameters. The Causal Decoupling algorithm provides a polynomial sample complexity, $\mathcal{O}(s^2\lambda_c^2/\epsilon^2)$. This is achieved at the cost of requiring a quantum processor large enough to accommodate an entire connected component of the light cone, with a qubit requirement of $\mathcal{O}(s d R^D)$.

The Algebraic Decomposition algorithm offers the inverse trade-off. It minimizes the quantum hardware requirement to $d+1$ qubits—sufficient for a single cluster—at the price of reintroducing an exponential scaling in the sample complexity. This scaling, $2^{\mathcal{O}(sTR^D)}$, depends on the light cone volume, not the global system size.

To illustrate the implications, we consider a scenario in 2D architectures where the number of qubits per cluster, $d$, scales as the square root of the total system size, $n$. This implies the number of clusters is $N\sim n/d = \mathcal{O}(\sqrt{n})$. We analyze the exponential factors for a shallow-depth circuit ($T, R, s = \mathcal{O}(1)$) in 2D ($D=2$):
\begin{itemize}
    \item \textbf{Circuit Cutting:} The exponential factor is $2^{\mathcal{O}(T D N)} = 2^{\mathcal{O}({N})} = 2^{\mathcal{O}(\sqrt{n})}$. The cost grows exponentially with the total system size.
    \item \textbf{Causal Decoupling:} There is no exponential factor in the sample complexity. The qubit requirement is $\mathcal{O}(s d R^D) = \mathcal{O}(\sqrt{n})$.
    \item \textbf{Algebraic Decomposition:} The exponential factor is $2^{\mathcal{O}(sTR^D)} = 2^{\mathcal{O}(1)}$. The sample complexity overhead is constant with respect to system size. The qubit requirement for each quantum task is $d+1 = \mathcal{O}(\sqrt{n})$.
\end{itemize}
This comparison shows a clear hierarchy. Circuit cutting becomes intractable for large systems due to its scaling. The choice between our two light-cone protocols depends on the available hardware architecture, not just the total number of available qubits.

The Causal Decoupling method's polynomial sample complexity is highly advantageous, but its hardware demand is stringent: it requires a single quantum processor capable of containing and executing an entire connected light-cone component, which in this scenario scales as $\mathcal{O}(\sqrt{n})$ qubits. This device must also support the specific inter-cluster connectivity required by the light-cone circuit.

Conversely, the Algebraic Decomposition method is characterized by its architectural flexibility. It does not require a single large processor. Instead, all quantum tasks are broken down into Hadamard tests performed on independent devices of size $\mathcal{O}(d) = \mathcal{O}(\sqrt{n})$ each, which need no connection between them.

Therefore, the decision is dictated by the available hardware architecture. If a large, connected quantum processor of size $\mathcal{O}(\sqrt{n})$ is available, the Causal Decoupling method is preferable due to its sampling efficiency. However, in a scenario with access to multiple, smaller, disconnected quantum modules, the Algebraic Decomposition method provides a viable path forward. This is a capability not offered by the other methods, which demand intra-light-cone connectivity.

\subsection{Practical Application Scenarios}

Our theoretical framework has direct applications in several domains of quantum simulation and algorithm benchmarking on near-term hardware.

The primary utility of the proposed protocols lies in tasks common in many-body physics and quantum chemistry, where Hamiltonians are typically sums of local terms~\cite{georgescu2014quantum,mcardle2020quantum}. A direct application is their use as a subroutine in Variational Quantum Algorithms (VQEs). To optimize the parameters of an ansatz, VQE requires repeated estimation of the Hamiltonian expectation value, $\langle H \rangle = \sum_\alpha c_\alpha \langle P_\alpha \rangle$. Our methods provide a way to compute each term $\langle P_\alpha \rangle$ with a cost that depends on the locality of the term itself, not the size of the entire system, which is a consideration for large-scale VQE simulations~\cite{peruzzo2014variational,tilly2022variational}.

Beyond energy estimation, our protocols are useful for characterizing properties of quantum states. They enable the direct computation of two-point (and higher-order) correlation functions, such as $\langle \sigma_i^z \sigma_j^z \rangle$, across varying distances. These correlators are fundamental for identifying quantum phases of matter, locating phase transitions, and understanding the propagation of information in dynamical systems~\cite{zeng2015quantum}.

The algorithms also find a natural application in quantum error correction (QEC). Benchmarking the performance of an encoded logical qubit can require measuring the expectation values of local stabilizer operators that define the QEC code~\cite{gottesman1997stabilizer,terhal2015quantum}. Our methods provide a resource-efficient approach to perform these checks on a logical qubit that may be part of a much larger quantum computation.

Finally, the favorable scaling of our local protocols highlights the inherent difficulty of estimating global observables. Properties like the fidelity of a state preparation, $\mathcal{F} = |\langle\psi|U|0^n\rangle|^2$, or the Loschmidt echo, $\mathcal{L}(t) = |\langle\psi_0|e^{-iHt}|\psi_0\rangle|^2$, are important for algorithm benchmarking and studying quantum chaos~\cite{flammia2011direct,peres1984stability,goussev2012loschmidt}. However, these quantities lack the local structure that our algorithms exploit. The efficiency of our methods for local tasks serves as a contrast, highlighting that a general-purpose solution for such global properties remains an open challenge, likely requiring resources that scale with parameters beyond just the observable's locality.

\subsection{Conclusion and Outlook}
\label{sec:conclusion}

In this work, we have developed and analyzed a framework for estimating the expectation values of local observables on large clustered quantum systems, based on the principle of the causal light cone. We presented two distinct algorithms—a Causal Decoupling method and an Algebraic Decomposition method—that both decouple the primary computational cost from the global system size. Together, they offer a versatile toolkit for quantum simulation on near-term modular architectures.

Our central contribution is the demonstration that probing local physics does not require simulating the entire, globally entangled quantum state. Our {Causal Decoupling Algorithm} (Theorem~\ref{thm:Ddim}) exploits the geometric structure of the light cone to achieve a polynomial sample complexity, $\mathcal{O}(\lambda_c^2 s^2/\epsilon^2)$, which is independent of any circuit or system size parameters. This efficiency, however, is contingent upon having a single, connected quantum processor large enough to accommodate an entire light-cone component, with a qubit requirement of $\mathcal{O}(sdR^D)$.

In contrast, our {Algebraic Decomposition Algorithm} (Theorem~\ref{thm:clustered_systems}) further decomposes the light cone algebraically to minimize quantum hardware demands. It requires only small, potentially disconnected quantum devices of size $\mathcal{O}(d)$, making it well-suited for highly distributed quantum architectures. This architectural flexibility is accompanied by a sample complexity that scales exponentially with the light-cone volume, $\mathcal{O}(2^{\mathcal{O}(\text{Vol}_{\max})})$. For shallow-depth circuits, this cost remains fixed and independent of the total system size.

These results delineate a clear trade-off between sampling complexity and quantum hardware capabilities, allowing practitioners to choose a suitable strategy based on available resources.

While our framework provides viable solutions, the analysis in this work is primarily tailored to regular lattices with nearest-neighbor interactions. Extending these protocols to more complex hardware topologies or physical systems with long-range interactions remains a relevant direction for future research~\cite{arute2019quantum,wright2019benchmarking,jurcevic2021demonstration}. Furthermore, the performance bounds could potentially be tightened, for instance by applying importance sampling to the classical summation in the Algebraic Decomposition algorithm~\cite{haferkamp2022linear}.

Future work could focus on validating these theoretical bounds through numerical benchmarking and experimental demonstrations on modular hardware. The inherent modularity of both algorithms makes them suited for integration with cluster-specific error mitigation techniques~\cite{li2017efficient,endo2018practical,temme2017error,cai2023quantum}. 
Finally, exploring synergies with classical simulation methods, such as using tensor network algorithms to contract the large classical sums generated by the Algebraic Decomposition algorithm, could lead to further efficiencies in hybrid quantum-classical computation~\cite{schollwock2011density}.

\begin{acknowledgements}
We thank Mile Gu and Yukun Zhang for fruitful discussions on early ideas of the project. 
This work is supported by the Innovation Program for Quantum Science and Technology (Grant No.~2023ZD0300200), the National Natural Science Foundation of China NSAF Grant No.~U2330201 and Grant No.~12361161602, Beijing Natural Science Foundation (Grant No.~Z250004), and Beijing Science and Technology Planning Project (Grant No.~Z25110100810000).
\end{acknowledgements}

\bibliographystyle{apsrev4-2}
\bibliographystyle{unsrt}
\bibliography{ref}

\appendix
\section{Proof of Theorem~\ref{thm:Ddim}}
\label{sec:proof_thm_ddim}

\subsection{Variance Bound for Product Estimators}
\label{app:lemma}
\begin{lemma}\label{lem:var-upper-bound}
Let $a_i := \mu_{\alpha,i}^2$ and $x_i := (1 - \mu_{\alpha,i}^2)/K_{\alpha,i}$ for $i=1,\dots,k_\alpha$.
Assume that $0 \le a_i \le 1$ and $K_{\alpha,i} \ge 1$, so that $x_i \ge 0$ and $a_i + x_i \le 1$.
Then the following inequality holds:
\begin{equation}\label{eq:appendix-var-bound}
\prod_{i=1}^{k_\alpha}\!\Big(a_i + x_i\Big)
- \prod_{i=1}^{k_\alpha}\! a_i
\;\le\;
\sum_{i=1}^{k_\alpha} x_i.
\end{equation}
Equivalently,
\begin{equation}\label{eq:var-bound-mu}
\prod_{i=1}^{k_\alpha}\!\Big(\mu_{\alpha,i}^2+\frac{1-\mu_{\alpha,i}^2}{K_{\alpha,i}}\Big)
-\prod_{i=1}^{k_\alpha}\!\mu_{\alpha,i}^2
\;\le\;
\sum_{i=1}^{k_\alpha}\!\frac{1-\mu_{\alpha,i}^2}{K_{\alpha,i}}.
\end{equation}
\end{lemma}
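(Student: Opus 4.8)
The plan is to prove the first inequality~\eqref{eq:appendix-var-bound}, since the second~\eqref{eq:var-bound-mu} follows immediately by substituting $a_i = \mu_{\alpha,i}^2$ and $x_i = (1-\mu_{\alpha,i}^2)/K_{\alpha,i}$. These satisfy the stated hypotheses: $0 \le a_i \le 1$ because $\mu_{\alpha,i}^2 \in [0,1]$; $x_i \ge 0$ because $K_{\alpha,i} \ge 1$; and $a_i + x_i \le 1$ because $\mu_{\alpha,i}^2 + (1-\mu_{\alpha,i}^2)/K_{\alpha,i} \le \mu_{\alpha,i}^2 + (1-\mu_{\alpha,i}^2) = 1$ whenever $K_{\alpha,i}\ge 1$.

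The key observation is that under the hypotheses every factor satisfies $0 \le a_i \le 1$ and $0 \le a_i + x_i \le 1$, so any partial product of such factors is itself bounded between $0$ and $1$. I would exploit this via the standard telescoping decomposition of a difference of products. Writing $b_i := a_i + x_i$, I expand
\begin{equation}
\prod_{i=1}^{k_\alpha} b_i - \prod_{i=1}^{k_\alpha} a_i = \sum_{j=1}^{k_\alpha} \left(\prod_{i<j} b_i\right)(b_j - a_j)\left(\prod_{i>j} a_i\right),
\end{equation}
where $b_j - a_j = x_j$. Each of the two partial products appearing in a summand is a product of numbers in $[0,1]$, hence nonnegative and at most $1$, so each summand is bounded above by $x_j$. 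Summing over $j$ yields exactly $\sum_{j} x_j$, which establishes~\eqref{eq:appendix-var-bound}.

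Alternatively, I could argue by induction on $k_\alpha$: the base case $k_\alpha = 1$ is the identity $a_1 + x_1 - a_1 = x_1$, and in the inductive step I would write the $k$-fold product difference as $a_k(Q - P) + Q\,x_k$, where $P$ and $Q$ are the $(k-1)$-fold products of the $a_i$ and $b_i$ respectively; then bounding $a_k \le 1$, applying the inductive hypothesis to $Q - P$, and using $Q \le 1$ gives the claim. Either route is elementary, and the only substantive point—hence the main thing to get right—is verifying that the constraint $a_i + x_i \le 1$ forces every partial product to lie in $[0,1]$, which is precisely what prevents any product prefactor from exceeding $1$ and thereby lets each telescoped term be controlled by $x_j$ alone.
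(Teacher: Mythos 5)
Your proof is correct. Your primary argument---the telescoping decomposition $\prod_i b_i - \prod_i a_i = \sum_j \bigl(\prod_{i<j} b_i\bigr)(b_j-a_j)\bigl(\prod_{i>j} a_i\bigr)$ with $b_i = a_i + x_i$, followed by bounding each partial product by $1$---is a genuinely different presentation from the paper's, which proves the lemma by induction on $k_\alpha$: the paper writes the $(k_\alpha+1)$-fold difference as $a_{k_\alpha+1}(P_k - Q_k) + x_{k_\alpha+1}P_k$, invokes the inductive hypothesis on $P_k - Q_k$, and uses $a_{k_\alpha+1}\le 1$, $P_k \le 1$. The two are closely related---your telescoping identity is exactly what the paper's induction produces when unrolled, and both hinge on the same key fact you correctly isolate, namely that $a_i + x_i \le 1$ forces every partial product into $[0,1]$---but your version replaces the inductive bookkeeping with a single closed-form identity, which makes it transparent that each factor contributes at most $x_j$ to the total and shows directly where slack enters (any partial product strictly below $1$). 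The induction route, which you also sketch as an alternative and which matches the paper's proof essentially line for line (your $a_k(Q-P) + Q\,x_k$ is the paper's expansion with $P,Q$ swapped in notation), trades that explicitness for a shorter write-up of the estimate in each step. Either argument fully establishes the lemma, and your reduction of~\eqref{eq:var-bound-mu} to~\eqref{eq:appendix-var-bound} by verifying the hypotheses on $a_i$, $x_i$ is also complete.
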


\begin{proof}
We prove the result by induction on the number of factors $k_\alpha$.

\paragraph{Base case.}
For $k_\alpha = 1$ the statement is an equality:
\[
(a_1 + x_1) - a_1 = x_1.
\]

\paragraph{Inductive step.}
Assume that the inequality~\eqref{eq:appendix-var-bound} holds for some integer $k_\alpha \ge 1$.
Define
\[
P_k := \prod_{i=1}^{k_\alpha} (a_i + x_i),
\quad
Q_k := \prod_{i=1}^{k_\alpha} a_i.
\]
For $k_\alpha + 1$ factors we have
\begin{align}
\prod_{i=1}^{k_\alpha+1}(a_i + x_i) - \prod_{i=1}^{k_\alpha+1} a_i
&= (a_{k_\alpha+1} + x_{k_\alpha+1}) P_k - a_{k_\alpha+1} Q_k \nonumber\\
&= a_{k_\alpha+1} (P_k - Q_k) + x_{k_\alpha+1} P_k. \label{eq:appendix-expansion}
\end{align}
By the induction hypothesis, $P_k - Q_k \le \sum_{i=1}^{k_\alpha} x_i$.
Using $0 \le a_{k_\alpha+1} \le 1$ and $P_k \le 1$ (since each $a_i + x_i \le 1$), we obtain
\[
a_{k_\alpha+1} (P_k - Q_k) \le \sum_{i=1}^{k_\alpha} x_i, \qquad
x_{k_\alpha+1} P_k \le x_{k_\alpha+1}.
\]
Substituting these bounds into Eq.~\eqref{eq:appendix-expansion} yields
\[
\prod_{i=1}^{k_\alpha+1}(a_i + x_i) - \prod_{i=1}^{k_\alpha+1} a_i
\le \sum_{i=1}^{k_\alpha+1} x_i,
\]
completing the induction.

\paragraph{Conclusion.}
Therefore, the inequality~\eqref{eq:appendix-var-bound} holds for all integers $k_\alpha \ge 1$.
Substituting back $a_i = \mu_{\alpha,i}^2$ and $x_i = (1 - \mu_{\alpha,i}^2)/K_{\alpha,i}$ gives the desired bound~\eqref{eq:var-bound-mu}.
\end{proof}

\paragraph{Remarks.}
The inequality is tight for $k_\alpha = 1$ and becomes strict whenever at least one $a_i < 1$ or one product term $(a_i+x_i) < 1$.
Intuitively, the left-hand side represents the cumulative variance contribution from multiplicative estimators, which is always bounded by the sum of their individual variances when the outcome variables are bounded in $[-1,1]$.

\subsection{Proof of Theorem~\ref{thm:Ddim}}
\begin{proof}[Proof of Theorem~\ref{thm:Ddim}]

The proof is structured in two parts. We first establish the correctness of the algorithm's factorization of the expectation value. We then analyze the variance of the resulting estimator to derive the resource costs.

\proofstep{1. Correctness of the Estimator.}
The algorithm targets the expectation value $\mu = \sum_{\alpha=1}^m c_\alpha \mu_\alpha$, where $\mu_\alpha = \langle 0^n|U^\dagger P_\alpha U|0^n\rangle$.

For a specific cluster-level $s$-local Pauli term $P_\alpha$, let $\mathcal{C}_\alpha$ be the set of at most $s$ clusters where it acts non-trivially. Due to the locality of the circuit $U$, the evolved operator $U^\dagger P_\alpha U$ is supported entirely within the union of the light-cones of the clusters in $\mathcal{C}_\alpha$. Let this union be $L(\mathcal{C}_\alpha) = \bigcup_{j \in \mathcal{C}_\alpha} L(C_j)$. Any gate in $U$ with support outside $L(\mathcal{C}_\alpha)$ commutes with $P_\alpha$ and cancels with its inverse from $U^\dagger$. The evolution can therefore be restricted to a local unitary $U_{\mathrm{loc}}^{(\alpha)}$ that only includes gates from $U$ acting on qubits within $L(\mathcal{C}_\alpha)$.

As described in Algorithm~\ref{alg:Ddim}, the support $L(\mathcal{C}_\alpha)$ may decompose into $k$ disjoint, causally disconnected components $\{L_1, \dots, L_{k_\alpha}\}$, where $1 \le k_\alpha \le s$. Consequently, both the local unitary and the Pauli operator factorize into a tensor product over these components:
\[
U_{\mathrm{loc}}^{(\alpha)} = \bigotimes_{i=1}^{k_\alpha} U_{\mathrm{loc}}^{(\alpha,i)} \quad \text{and} \quad P_\alpha = \bigotimes_{i=1}^{k_\alpha} O_{\alpha,i},
\]
where $O_{\alpha,i}$ comprises the parts of $P_\alpha$ supported on $L_i$. The initial state also factorizes. This leads to a full factorization of the expectation value:
\begin{align*}
\mu_\alpha 
&= \prod_{i=1}^{k_\alpha} \langle 0^{n_i} | (U_{\mathrm{loc}}^{(\alpha,i)})^\dagger O_{\alpha,i} U_{\mathrm{loc}}^{(\alpha,i)} | 0^{n_i} \rangle = \prod_{i=1}^{k_\alpha} \mu_{\alpha,i}.
\end{align*}
This justifies the core strategy of the algorithm: estimating the total expectation $\mu_\alpha$ by computing the product of independent, smaller expectation values $\mu_{\alpha,i}$. 

The final estimator for the term is $\tilde{\mu}_\alpha = \prod_{i=1}^{k_\alpha} \tilde{\mu}_{\alpha,i}$, where each $\tilde{\mu}_{\alpha,i}$ is the sample mean of $K_{\alpha,i}$ independent measurements. The expectation of each individual measurement shot is $\mu_{\alpha,i}$, making $\tilde{\mu}_{\alpha,i}$ an unbiased estimator for the component, i.e., $\mathbb{E}[\tilde{\mu}_{\alpha,i}] = \mu_{\alpha,i}$. Since the experiments for each component are independent, the random variables $\{\tilde{\mu}_{\alpha,i}\}$ are independent. Therefore, the expectation of their product is the product of their expectations:
\[
\mathbb{E}[\tilde{\mu}_\alpha] = \mathbb{E}\left[\prod_{i=1}^{k_\alpha} \tilde{\mu}_{\alpha,i}\right] = \prod_{i=1}^{k_\alpha} \mathbb{E}[\tilde{\mu}_{\alpha,i}] = \prod_{i=1}^{k_\alpha} \mu_{\alpha,i} = \mu_\alpha.
\]
This confirms that the estimator $\tilde{\mu}_\alpha$ is strictly unbiased.

\proofstep{2. Resource Cost Analysis.}
We now derive the resource complexities stated in Theorem~\ref{thm:Ddim}.

\paragraph{Sample Complexity.}
The estimator we construct is unbiased, i.e.\ $\mathbb{E}[\tilde{\mu}]=\mu$. Hence the mean squared error equals the variance,
\begin{equation}\label{eq:mse-var}
    \mathrm{MSE}(\tilde{\mu})=\mathrm{Var}(\tilde{\mu}).
\end{equation}
To guarantee the final estimation error $|\tilde{\mu}-\mu|\le\epsilon$ with the failure probability fixed at $\delta=\tfrac{1}{3}$ (as in Theorem~\ref{thm:Ddim}), it suffices to enforce the variance bound
\begin{equation}\label{eq:var-target}
    \mathrm{Var}(\tilde{\mu})\le \mathcal{O}(\epsilon^2).
\end{equation}


Write the observable as $O=\sum_{\alpha} c_{\alpha}P_{\alpha}$ and let $\tilde{\mu}=\sum_{\alpha} c_{\alpha}\tilde{\mu}_\alpha$ be the estimator built from per-term estimators $\tilde{\mu}_\alpha$. By independence of different Pauli terms' estimators (or by appropriate sampling design) the total variance decomposes as
\begin{equation}\label{eq:total-var-decomp}
    \mathrm{Var}(\tilde{\mu})=\sum_{\alpha} c_{\alpha}^2\,\mathrm{Var}(\tilde{\mu}_\alpha).
\end{equation}
For each $\alpha$ we form $\tilde{\mu}_\alpha$ as a product of $k_\alpha$ independent component estimators,
\begin{equation}\label{eq:product-estimator}
    \tilde{\mu}_\alpha=\prod_{i=1}^{k_\alpha}\tilde{\mu}_{\alpha,i},
\end{equation}
where each component $\tilde{\mu}_{\alpha,i}$ is the empirical mean of $K_{\alpha,i}$ independent $\{\pm1\}$-valued measurement outcomes with mean $\mu_{\alpha,i}$ and variance $(1-\mu_{\alpha,i}^2)/K_{\alpha,i}$. Independence across components implies the exact identity
\begin{equation}\label{eq:var-product-exact}
    \mathrm{Var}(\tilde{\mu}_\alpha)
    =\prod_{i=1}^{k_\alpha}\Big(\mu_{\alpha,i}^2+\frac{1-\mu_{\alpha,i}^2}{K_{\alpha,i}}\Big)
    -\prod_{i=1}^{k_\alpha}\mu_{\alpha,i}^2.
\end{equation}
While a first-order expansion for large $K_{\alpha,i}$ can provide intuition about the behavior of the variance, a stricter analysis yields a convenient upper bound that holds for any $K_{\alpha,i} \ge 1$. A formal proof, which proceeds by induction, is provided in Lemma~\ref{lem:var-upper-bound}. The bound is given by
\begin{equation}\label{eq:var-alpha-upper}
    \mathrm{Var}(\tilde{\mu}_\alpha)
    \le \sum_{i=1}^{k_\alpha}\frac{1-\mu_{\alpha,i}^2}{K_{\alpha,i}}
    \le \sum_{i=1}^{k_\alpha}\frac{1}{K_{\alpha,i}}.
\end{equation}
Combining \eqref{eq:total-var-decomp} and \eqref{eq:var-alpha-upper} gives
\begin{equation}\label{eq:total-var-upper}
    \mathrm{Var}(\tilde{\mu})
    \le \sum_{\alpha} c_{\alpha}^2 \sum_{i=1}^{k_\alpha}\frac{1}{K_{\alpha,i}}.
\end{equation}

We now choose the allocation of measurement shots $\{K_{\alpha,i}\}$ to minimize the total number of shots
\begin{equation}\label{eq:Ktot-def}
    K_{\rm tot}=\sum_{\alpha=1}^m \sum_{i=1}^{k_\alpha} K_{\alpha,i}
\end{equation}
subject to the variance constraint \eqref{eq:total-var-upper} being at most a target variance $V$. Consider the continuous relaxation and form the Lagrangian
\begin{equation}\label{eq:lagrangian}
    \mathcal{L}=\sum_{\alpha,i} K_{\alpha,i} + \lambda\Big(\sum_{\alpha} c_{\alpha}^2\sum_{i=1}^{k_\alpha}\frac{1}{K_{\alpha,i}}-V\Big),
\end{equation}
with multiplier $\lambda>0$. Stationarity $\partial\mathcal{L}/\partial K_{\alpha,i}=0$ yields
\begin{equation}\label{eq:Kalpha-solution}
    1-\lambda\frac{c_{\alpha}^2}{K_{\alpha,i}^2}=0
    \quad\Longrightarrow\quad
    K_{\alpha,i}=\sqrt{\lambda}\,|c_{\alpha}|.
\end{equation}
(Here we treat $K_{\alpha,i}$ as a continuous variable, as rounding to the nearest integer does not affect the dominant scaling behavior.)

Thus every component associated with the same Pauli term $\alpha$ receives the same number of shots, proportional to $|c_\alpha|$. Define
\begin{equation}\label{eq:K1-def}
    K_1:=\sum_{\alpha} k_\alpha |c_\alpha|.
\end{equation}
Using the constraint we obtain
\begin{equation}\label{eq:V-constraint}
    V = \sum_{\alpha} c_{\alpha}^2\sum_{i=1}^{k_\alpha}\frac{1}{\sqrt{\lambda}|c_\alpha|}
    = \frac{1}{\sqrt{\lambda}} \sum_{\alpha} k_\alpha |c_\alpha|
    = \frac{K_1}{\sqrt{\lambda}},
\end{equation}
hence $\sqrt{\lambda}=K_1/V$. The total number of shots then follows from \eqref{eq:Ktot-def} and \eqref{eq:Kalpha-solution}:
\begin{equation}\label{eq:Ktot-exact}
    K_{\rm tot}=\sum_{\alpha,i} K_{\alpha,i}=\sqrt{\lambda}\sum_{\alpha} k_\alpha |c_\alpha|
    =\sqrt{\lambda}\,K_1=\frac{K_1^2}{V}.
\end{equation}
Using the worst-case bound $k_\alpha\le s$ and $K_1\le s\sum_\alpha|c_\alpha|=s\lambda_c$ gives
\begin{equation}\label{eq:Ktot-worst}
    K_{\rm tot}=\mathcal{O}\!\left(\frac{\lambda_c^2 s^2}{V}\right).
\end{equation}


To obtain a high-probability guarantee with failure probability at most $\delta=\tfrac{1}{3}$ via Chebyshev's inequality, note that
\begin{equation}\label{eq:chebyshev}
    \Pr\big(|\tilde{\mu}-\mu|\ge \epsilon\big)\le \frac{\mathrm{Var}(\tilde{\mu})}{\epsilon^2}.
\end{equation}
Requiring the right-hand side to be $\le\delta$ is achieved by enforcing $\mathrm{Var}(\tilde{\mu})\le \delta\epsilon^2$, i.e.\ choosing $V=\delta\epsilon^2=(1/3)\epsilon^2$. Substituting this constant-$\delta$ choice into \eqref{eq:Ktot-worst} yields
\begin{equation}\label{eq:sample-complexity}
    K_{\rm tot}=\mathcal{O}\!\left(\frac{\lambda_c^2 s^2}{\delta\epsilon^2}\right)
    =\mathcal{O}\!\left(\frac{\lambda_c^2 s^2}{\epsilon^2}\right),
\end{equation}
where the last equality absorbs the constant factor $1/\delta=3$ into the big-$\mathcal{O}$ notation.


\paragraph{Qubit Requirement.}
The quantum subroutine simulates the evolution restricted to the union of light cones $L(\mathcal{C}_\alpha)$ associated with the clusters on which $P_\alpha$ acts. In a $D$-dimensional lattice a single cluster-level light cone of range $R$ covers at most $\mathcal{O}(R^D)$ clusters. Since each $P_\alpha$ acts on at most $s$ clusters, the union of these light cones contains at most $\mathcal{O}(sR^D)$ clusters. With each cluster containing up to $d$ physical qubits, the total number of qubits required to simulate any such light-cone region is therefore bounded by
\begin{equation}\label{eq:qubit-req}
    \mathcal{O}(s\,d\,R^D),
\end{equation}
which yields the quantum memory requirement stated in Theorem~\ref{thm:Ddim}.

\paragraph{Circuit Depth.}
Each local circuit $U_{\mathrm{loc}}^{(\alpha,i)}$ is obtained by restricting the global circuit $U$ to the corresponding light-cone region while preserving the temporal order of gates. Hence the depth of any local subcircuit cannot exceed the depth of the global circuit; consequently the circuit depth required by the protocol scales as
\begin{equation}\label{eq:circuit-depth}
    T,
\end{equation}
as claimed.

\end{proof}

\section{Proof Details for Theorem~\ref{thm:clustered_systems}}
\label{app:proof_gate_decomposition}


\begin{proof}[Proof of Theorem~\ref{thm:clustered_systems}]
The proof is structured in two main parts. We first establish the correctness of the estimator, which is derived from an algebraic decomposition of the light cone unitary. We then perform a rigorous variance analysis of this estimator to derive the stated resource costs.

\proofstep{1. Correctness of the Estimator.}
The algorithm targets the expectation value $\mu = \sum_{\alpha=1}^m c_\alpha \mu_\alpha$, where $\mu_\alpha = \langle 0^n|U^\dagger P_\alpha U|0^n\rangle$.

For a specific $s$-local Pauli term $P_\alpha$, its evolution is contained within the light cone circuit $\widetilde{U}(\mathcal{C}_\alpha)$. The core strategy of Algorithm~\ref{alg:gate_decomposition} is to decompose this unitary into a linear combination of operators that are local to individual clusters. This is a standard technique in quasi-probability simulations, yielding:
\begin{equation}
    \widetilde{U}(\mathcal{C}_\alpha) = \sum_{j=1}^{N_\alpha} w_j W_j,
\end{equation}
where each $W_j$ is a product of intra-cluster gates, the coefficients are constant-level~\cite{mitarai2021constructing}, $|w_j| = \mathcal{O}(1)$, and the number of terms $N_\alpha = 2^{\mathcal{O}(\text{Vol}_\alpha)}$. This decomposition allows us to express $\mu_\alpha$ exactly as:
\begin{equation}
    \mu_\alpha = \sum_{j,k=1}^{N_\alpha} w_j^* w_k \bra{0} W_j^\dagger P_\alpha W_k \ket{0} = \sum_{j,k=1}^{N_\alpha} w_j^* w_k \, \eta_{jk}.
\end{equation}
Each term $\eta_{jk}$ further factorizes into a product over the $\text{Size}_\alpha = |\mathcal{L}(\mathcal{C}_\alpha)|$ clusters within the light cone:
\begin{equation}
    \eta_{jk} = \prod_{l=1}^{\text{Size}_\alpha} \mu_{jk,l}, \quad \text{where} \quad \mu_{jk,l} = \bra{\psi_{j,l}} P_{\alpha,l} \ket{\psi_{k,l}}.
\end{equation}
The algorithm constructs an estimator $\tilde{\mu}_\alpha = \sum_{j,k} w_j^* w_k \tilde{\eta}_{jk}$. The estimator for each product term is $\tilde{\eta}_{jk} = \prod_l \tilde{\mu}_{jk,l}$. Each local component $\tilde{\mu}_{jk,l}$ is obtained from a Hadamard test on a small quantum processor dedicated to cluster $l$. Since these are separate quantum experiments performed on different clusters, the resulting estimators $\{\tilde{\mu}_{jk,l}\}_{l=1}^{\text{Size}_\alpha}$ are statistically independent random variables.

Each individual estimator $\tilde{\mu}_{jk,l}$ is unbiased, i.e., $\mathbb{E}[\tilde{\mu}_{jk,l}] = \mu_{jk,l}$. Due to the independence of these estimators, the expectation of their product is the product of their expectations:
\begin{equation}
    \mathbb{E}[\tilde{\eta}_{jk}] = \mathbb{E}\left[\prod_{l=1}^{\text{Size}_\alpha} \tilde{\mu}_{jk,l}\right] = \prod_{l=1}^{\text{Size}_\alpha} \mathbb{E}[\tilde{\mu}_{jk,l}] = \prod_{l=1}^{\text{Size}_\alpha} \mu_{jk,l} = \eta_{jk}.
\end{equation}
This confirms that $\tilde{\eta}_{jk}$ is an unbiased estimator for $\eta_{jk}$. Finally, by the linearity of expectation, the overall estimator $\tilde{\mu}_\alpha$ is also strictly unbiased.

\proofstep{2. Resource Cost Analysis.}
We now derive the resource complexities stated in the theorem.

\paragraph{Sample Complexity.}
The estimator $\tilde{\mu}=\sum_{\alpha} c_{\alpha}\tilde{\mu}_\alpha$ is unbiased, so its MSE equals its variance. To guarantee the final estimation error $|\tilde{\mu}-\mu|\le\epsilon$ with a constant failure probability, it suffices to enforce the variance bound $\mathrm{Var}(\tilde{\mu})\le \mathcal{O}(\epsilon^2)$. The total variance decomposes as:
\begin{equation}
    \mathrm{Var}(\tilde{\mu})=\sum_{\alpha} c_{\alpha}^2\,\mathrm{Var}(\tilde{\mu}_\alpha).
\end{equation}
For each $\alpha$, the estimator $\tilde{\mu}_\alpha = \sum_{j,k} w_j^* w_k \tilde{\eta}_{jk}$ is a sum over independently estimated terms. Thus, its variance is:
\begin{equation}
    \mathrm{Var}(\tilde{\mu}_\alpha) = \sum_{j,k=1}^{N_\alpha} |w_j^* w_k|^2 \mathrm{Var}(\tilde{\eta}_{jk}).
\end{equation}
We consider the estimator for each local term,
\begin{equation}
    \tilde{\eta}_{jk}=\prod_{l=1}^{\mathrm{Size}_\alpha}\tilde{\mu}_{jk,l},
\end{equation}
where the components $\tilde{\mu}_{jk,l}$ are independent complex estimators obtained from Hadamard tests.  
Independence implies that the variance admits the exact factorized form
\begin{equation}
    \mathrm{Var}(\tilde{\eta}_{jk})
    =\prod_{l=1}^{\mathrm{Size}_\alpha}\mathbb{E}\!\left[|\tilde{\mu}_{jk,l}|^{2}\right]
    -\prod_{l=1}^{\mathrm{Size}_\alpha}|\mu_{jk,l}|^{2}.
    \label{eq:var-product-one}
\end{equation}
Each complex estimator is obtained from two statistically independent Hadamard tests for the real and imaginary parts, each using $K_{jkl}/2$ samples. Denote these estimators by $\widetilde{\Re}$ and $\widetilde{\Im}$. The standard Hadamard-test statistics give
\begin{align}
    \mathbb{E}[\widetilde{\Re}] &= \Re(\mu_{jk,l}), &
    \mathrm{Var}(\widetilde{\Re}) &= \frac{1-\Re(\mu_{jk,l})^{2}}{K_{jkl}/2},\\
    \mathbb{E}[\widetilde{\Im}] &= \Im(\mu_{jk,l}), &
    \mathrm{Var}(\widetilde{\Im}) &= \frac{1-\Im(\mu_{jk,l})^{2}}{K_{jkl}/2}.
\end{align}
Since $\tilde{\mu}_{jk,l}=\widetilde{\Re}+i\,\widetilde{\Im}$ and the two parts are independent,
\begin{equation}
\begin{split}
    \mathbb{E}\!\left[|\tilde{\mu}_{jk,l}|^{2}\right]
    =\mathbb{E}[\widetilde{\Re}^{2}]+\mathbb{E}[\widetilde{\Im}^{2}]
    =|\mu_{jk,l}|^{2}
    +\frac{2-|\mu_{jk,l}|^{2}}{K_{jkl}/2}\\
    =|\mu_{jk,l}|^{2}+\frac{2(2-|\mu_{jk,l}|^{2})}{K_{jkl}}.
\end{split}
    \label{eq:second-moment}
\end{equation}
Because $P_{\alpha,l}$ is unitary, we have $|\mu_{jk,l}|^{2}\le 1$, and hence
\begin{equation}
\begin{split}
    \mathbb{E}\!\left[|\tilde{\mu}_{jk,l}|^{2}\right]
    =|\mu_{jk,l}|^{2}+\mathcal{O}\!\left(\frac{1}{K_{jkl}}\right),\\
    \mathrm{Var}(\tilde{\mu}_{jk,l})
    =\mathbb{E}\!\left[|\tilde{\mu}_{jk,l}|^{2}\right]-|\mu_{jk,l}|^{2}
    \le \mathcal{O}\!\left(\frac{1}{K_{jkl}}\right).
\end{split}
\end{equation}

To bound the variance of the product estimator, we use the standard inequality for independent random variables with bounded second moments:
\begin{equation}
\begin{split}
    \mathrm{Var}\!\left(\prod_{l=1}^{m} X_l\right)
    \le \left(\prod_{l=1}^{m}\mathbb{E}[X_l^{2}]\right)
    \sum_{l=1}^{m}
        \frac{\mathrm{Var}(X_l)}{\mathbb{E}[X_l^{2}]},\\
    X_l\ \text{independent}.
\end{split}
    \label{eq:product-variance-bound}
\end{equation}
Applying \eqref{eq:product-variance-bound} to $X_l=\tilde{\mu}_{jk,l}$ and using $\mathbb{E}[|\tilde{\mu}_{jk,l}|^{2}]\le 1+\mathcal{O}(1/K_{jkl})\le 2$ for all sufficiently large $K_{jkl}$, we obtain
\begin{equation}
    \mathrm{Var}(\tilde{\eta}_{jk})
    \le \mathcal{O}(1)\sum_{l=1}^{\mathrm{Size}_\alpha}\frac{1}{K_{jkl}},
\end{equation}
which captures the correct scaling with local cluster size and shot numbers.

Since the coefficients satisfy $|w_j|$ and $\sum_j|w_j|$ are both constant~\cite{mitarai2021constructing,singh2023experimental}, we combine these results to get the total variance bound:
\begin{equation}\label{eq:total-var-upper-alg1}
    \mathrm{Var}(\tilde{\mu}) \le \sum_{\alpha=1}^m c_\alpha^2 \left( C\cdot \sum_{j,k=1}^{N_\alpha} \sum_{l=1}^{\text{Size}_\alpha} \frac{1}{K_{jkl}} \right).
\end{equation}
We now choose the shot allocation $\{K_{jkl}\}$ to minimize the total number of shots, $K_{\text{tot}} = \sum_{\alpha,j,k,l} K_{jkl}$, subject to the variance constraint $\mathrm{Var}(\tilde{\mu}) \le V$, where $V=\mathcal{O}(\epsilon^2)$. We form the Lagrangian:
\begin{equation}
    \mathcal{L} = \sum_{\alpha,j,k,l} K_{jkl} + \lambda^\prime \left( C\cdot\sum_{\alpha} c_\alpha^2 \sum_{j,k,l} \frac{1}{K_{jkl}} - V \right).
\end{equation}
Stationarity, $\partial\mathcal{L}/\partial K_{jkl}=0$, yields the optimal allocation:
\begin{equation}
    1 - \lambda \frac{c_\alpha^2}{K_{jkl}^2} = 0 \quad \Longrightarrow \quad K_{jkl} = \sqrt{\lambda} |c_\alpha|,
\end{equation}
where $\lambda=\lambda^\prime C$.
This shows that the optimal number of shots for each local Hadamard test depends only on the coefficient $|c_\alpha|$ of its parent Pauli term. The total number of shots is:
\begin{equation}
\begin{split}
    K_{\text{tot}} = \sum_{\alpha,j,k,l} K_{jkl} = \sum_{\alpha=1}^m N_\alpha^2 \cdot \text{Size}_\alpha \cdot (\sqrt{\lambda}|c_\alpha|)\\
    = \sqrt{\lambda} \sum_{\alpha=1}^m |c_\alpha| N_\alpha^2 \text{Size}_\alpha.
\end{split}
\end{equation}
Now we use the variance constraint to find $\lambda$:
\begin{equation}
    V = C\sum_{\alpha} c_\alpha^2 \sum_{j,k,l} \frac{1}{\sqrt{\lambda}|c_\alpha|} = \frac{C}{\sqrt{\lambda}} \sum_{\alpha} |c_\alpha| N_\alpha^2 \text{Size}_\alpha.
\end{equation}
Solving for $\sqrt{\lambda}$ gives $\sqrt{\lambda} = \frac{C}{V} \sum_{\alpha} |c_\alpha| N_\alpha^2 \text{Size}_\alpha$. Substituting this into the expression for $K_{\text{tot}}$:
\begin{equation}
    K_{\text{tot}} = \frac{C}{V} \left( \sum_{\alpha=1}^m |c_\alpha| N_\alpha^2 \text{Size}_\alpha \right)^2.
\end{equation}
To obtain the final scaling, we take the worst-case bounds: $N_\alpha \le N_{\max} = 2^{\mathcal{O}(\text{Vol}_{\max})}$ and $\text{Size}_\alpha \le \text{Size}_{\max}$.
\begin{equation}
\begin{split}
    K_{\text{tot}} &\le \frac{C}{V} \left( \sum_{\alpha=1}^m |c_\alpha| N_{\max}^2 \text{Size}_{\max} \right)^2 \\
    &= \frac{C\cdot N_{\max}^4 \text{Size}_{\max}^2}{V} \left( \sum_{\alpha=1}^m |c_\alpha| \right)^2 \\
    &= \frac{C\cdot\lambda_c^2 \cdot \text{Size}_{\max}^2 \cdot (2^{\mathcal{O}(\text{Vol}_{\max})})^4}{V}\\
    &= \mathcal{O}\left( \frac{\lambda_c^2 \cdot \text{Size}_{\max}^2 \cdot 2^{\mathcal{O}(\text{Vol}_{\max})}}{V} \right).
\end{split}
\end{equation}
To obtain a high-probability guarantee via Chebyshev's inequality, we set the target variance $V = \mathcal{O}(\epsilon^2)$, yielding the final sample complexity:
\begin{equation}
    K_{\text{tot}} = \mathcal{O}\left( \frac{\lambda_c^2 \cdot \text{Size}_{\max}^2 \cdot 2^{\mathcal{O}(\text{Vol}_{\max})}}{\epsilon^2} \right).
\end{equation}

\paragraph{Qubit Requirement.}
The quantum subroutine involves estimating local transition amplitudes on a single cluster. This requires the qubits within that cluster (at most $d$) plus one ancillary qubit for the Hadamard test. The total number of qubits required for any quantum computation is therefore bounded by:
\begin{equation}
    d+1.
\end{equation}

\paragraph{Circuit Depth.}
Each local circuit $W_j$ is constructed by selecting gates from the global circuit $U$ that act within the light cone, while preserving their temporal sequence. Therefore, the depth of any circuit $W_j$, and consequently the depth of the controlled unitaries in the Hadamard tests, cannot exceed the depth of the global circuit. The circuit depth requirement is thus:
\begin{equation}
    T.
\end{equation}

\end{proof}

\section{Proofs for Corollaries}
\label{app:proof_corollaries}

Here we provide the proofs for the corollaries derived from Theorem~\ref{thm:clustered_systems} and its associated Algorithm~\ref{alg:gate_decomposition}. The core idea is to estimate the maximum light cone size and volume for specific system geometries and substitute these bounds into the general complexity formula.

\begin{proof}[Proof of Corollary~\ref{cor:lattice}]
We derive the complexity by specializing the general bounds from Theorem~\ref{thm:clustered_systems} to a $D$-dimensional lattice geometry. The key is to estimate the maximum light cone size, $\text{Size}_{\max}$, and volume, $\text{Vol}_{\max}$.

For an $s$-local operator $P_\alpha$, its support $\mathcal{C}_\alpha$ consists of at most $s$ clusters. On a lattice, the light cone of a single cluster after evolution with a circuit of range $R$ is a region of geometric radius $\mathcal{O}(R)$.

The scaling of the light cone size can be understood through a geometric argument based on dimensional recurrence. The number of clusters within a radius $R$ in a $D$-dimensional space is known to scale as $\mathcal{O}(R^D)$. Since the combined light cone $\mathcal{L}(\mathcal{C}_\alpha)$ is the union of at most $s$ such regions, its total size is bounded by:
\begin{equation}
    \text{Size}_{\max} = \max_\alpha |\mathcal{L}(\mathcal{C}_\alpha)| = \mathcal{O}(s R^D).
\end{equation}

The light cone volume, $\text{Vol}_{\max}$, is the total count of gates within this region integrated over the circuit's depth $T$. Assuming a roughly uniform distribution of gates, the volume is proportional to the depth multiplied by the average cone size:
\begin{equation}
    \text{Vol}_{\max} = \mathcal{O}(T \cdot \text{Size}_{\max}) = \mathcal{O}(sTR^D).
\end{equation}

Substituting this volume into the exponential term of the sample complexity from our main theorem gives the factor $2^{\mathcal{O}(sTR^D)}$. The polynomial prefactor, which scales with the square of the number of decomposed terms, can be bounded by the square of the light cone size, $\mathcal{O}(\text{Size}_{\max}^2) = \mathcal{O}(s^2 R^{2D})$.

Combining these parts, the total sample complexity becomes:
\begin{equation}
    \mathcal{O}\left(\dfrac{\lambda_c^2 \cdot s^2 R^{2D} \cdot 2^{\mathcal{O}(sTR^D)}}{\epsilon^2}\right).
\end{equation}
This completes the proof.
\end{proof}

\begin{proof}[Proof of Corollary~\ref{cor:all_to_all}]
We again start from the general complexity bound of Theorem~\ref{thm:clustered_systems} and specialize the light cone parameters for an all-to-all connected system, using the branching factor model.

In this model, the support of a single operator grows exponentially with the number of layers. For an $s$-local operator $P_\alpha$, which is initially supported on $s$ clusters, its support size after $T$ layers of the circuit will be approximately:
\begin{equation}
    \text{Size}_{\max} = \max_\alpha |\mathcal{L}(\mathcal{C}_\alpha)| \approx s \cdot \beta^T.
\end{equation}
The light cone volume $\text{Vol}_{\max}$ can be estimated as the product of the circuit depth $T$ and the final light cone size, representing a simplified time-integration. This gives:
\begin{equation}
    \text{Vol}_{\max} = \mathcal{O}(T \cdot \text{Size}_{\max}) = \mathcal{O}(sT\beta^T).
\end{equation}
This bound directly yields the exponential term $2^{\mathcal{O}(sT\beta^T)}$ in the sample complexity.

As established in the proof of Corollary~\ref{cor:lattice}, the polynomial prefactor is determined by the square of the maximum light cone size. Substituting the size for the all-to-all case, we get:
\begin{equation}
    \text{Prefactor} \propto \text{Size}_{\max}^2 = \mathcal{O}((s\beta^T)^2) = \mathcal{O}(s^2\beta^{2T}).
\end{equation}
Combining the prefactor, the exponential term, and the standard $\lambda_c^2/\epsilon^2$ scaling, the total sample complexity is:
\begin{equation}
    \mathcal{O}\left(\dfrac{\lambda_c^2 \cdot s^2 \beta^{2T} \cdot 2^{\mathcal{O}(sT\beta^T)}}{\epsilon^2}\right).
\end{equation}
This concludes the proof.
\end{proof}

\end{document}